  \providecommand\BibTeX{{%
    \normalfont B\kern-0.5em{\scshape i\kern-0.25em b}\kern-0.8em\TeX}}}
\newcommand{\Order}{\mathrm{O}}
\newcommand{\OrderT}{\tilde{\mathrm{O}}}
\newcommand{\OmegaT}{\tilde{\mathrm{\Omega}}}
\newcommand{\ThetaT}{\tilde{\Theta}}
\newcommand{\poly}{\mathop{\mathrm{poly}}\nolimits}
\newtheorem{problem}{Problem}
\newcommand{\Exp}{\mathop{\mathbb{E}}}
\renewcommand{\Pr}{\mathbb{P}}
\begin{document}

\title{Frequent Elements with Witnesses in Data Streams}

\author{Christian Konrad}
\email{christian.konrad@bristol.ac.uk}
\orcid{0000-0003-1802-4011}
\affiliation{%
  \institution{Department of Computer Science, University of Bristol}
  \city{Bristol}
  \country{United Kingdom}
}


\begin{abstract}
  Detecting frequent elements is among the oldest and most-studied problems in the area of data streams. Given a stream of $m$ data items in $\{1, 2, \dots, n\}$, the objective is to output items that appear at least $d$ times, for some threshold parameter $d$, and provably optimal algorithms are known today. However, in many applications, knowing only the frequent elements themselves is not enough: For example, an Internet router may not only need to know the most frequent destination IP addresses of forwarded packages, but also the timestamps of when these packages appeared or any other meta-data that ``arrived'' with the packages, e.g., their source IP addresses.
  
  In this paper, we introduce the witness version of the frequent elements problem: Given a desired approximation guarantee $\alpha \ge 1$ and a desired frequency $d \le \Delta$, where $\Delta$ is the frequency of the most frequent item, the objective is to report an item together with at least $d / \alpha$ timestamps of when the item appeared in the stream (or any other meta-data that arrived with the items). We give provably optimal algorithms for both the insertion-only and insertion-deletion stream settings:  In insertion-only streams, we show that space $\OrderT(n + d \cdot n^{\frac{1}{\alpha}})$ is necessary and sufficient for every integral $1 \le \alpha \le  \log n$.  In insertion-deletion streams, we show that space $\OrderT(\frac{n \cdot d}{\alpha^2})$ is necessary and sufficient, for every $\alpha \le \sqrt{n}$.
  
\end{abstract}

\begin{CCSXML}
<ccs2012>
<concept>
<concept_id>10003752.10003809.10010055</concept_id>
<concept_desc>Theory of computation~Streaming, sublinear and near linear time algorithms</concept_desc>
<concept_significance>500</concept_significance>
</concept>
<concept>
<concept_id>10003752.10003753.10003760</concept_id>
<concept_desc>Theory of computation~Streaming models</concept_desc>
<concept_significance>500</concept_significance>
</concept>
</ccs2012>
\end{CCSXML}

\ccsdesc[500]{Theory of computation~Streaming, sublinear and near linear time algorithms}
\ccsdesc[500]{Theory of computation~Streaming models}
\keywords{frequent elements, heavy hitters, data streams, algorithms, lower bounds}

\maketitle

\section{Introduction}
 The {\em streaming model of computation} addresses the fundamental issue that modern massive data sets 
 are too large to fit into the Random-Access Memory (RAM) of modern computers. Typical examples of such 
 data sets are Internet traffic logs, financial transaction streams, and massive graphical
 data sets, such as the Web graph and social network graphs. A data streaming algorithm receives its input 
 piece by piece in a linear fashion and has access to only a sublinear amount of memory. This prevents 
 the algorithm from seeing the input in its entirety at any one moment.

 The \textsf{Frequent Elements} (\textsf{FE}) (or heavy hitters) problem is among the oldest and most-studied problems in the area of data streams. Given a stream 
 $S=s_1, s_2, \dots, s_m$ of length $m$ with $s_i \in [n]$, for some integer $n$, the goal is to identify items in $[n]$ that appear at least $d = \epsilon m$ times, for some $\epsilon > 0$. This problem was first solved by Misra and Gries in 1982 \cite{mg82} and has since been addressed in countless research papers (e.g. \cite{ccf02,mm02,ev03,mae05,cm05,kx06,bcis09,cfm09,bdw19}), culminating in provably optimal algorithms
 \cite{bdw19}. 
 
 However, in many applications, only knowing the frequent items themselves is insufficient, and additional application-specific
 data is required. For example:
 
 \begin{itemize}
  \item Given a database log, a \textsf{FE} algorithm can be used to detect a frequently updated (or queried)
 entry. However, users who committed these updates (or queries) or the timestamps of when these updates
 (or queries) were executed cannot be reported by such an algorithm. 
 
  \item Given a stream of friendship updates in a social network graph, a \textsf{FE} algorithm can detect nodes of large 
 degree (e.g., an influencer in a social network). Their neighbours (e.g., followers of an influencer), however, cannot 
 be outputted by such an algorithm. 
  \item Given the traffic log of an Internet router logging timestamps, source, and destination IP addresses of forwarded IP packages,
  Denial-of-Service attacks can be detected by identifying {\em distinct frequent elements}, that is, frequent target IP addresses 
  that are requested from many distinct sources \cite{fabcs17}. Here, a (distinct) \textsf{FE} algorithm only reports frequent target IP addresses 
 and thus potential machines that were under attack, however, the timestamps of when these attacks occurred or the 
 source IP addresses from where the attacks originated remain unknown.  
 \end{itemize}

 In this paper, we introduce the witness version of the frequent elements problem, which captures the examples mentioned above. This problem is formulated as a problem on graphs:
 
 \begin{problem}[Frequent Elements with Witnesses (\textsf{FEwW}]
 In \textsf{FEwW}$(n, d)$, the input consists of a bipartite graph $G = (A, B, E)$ with $|A| = n$ and 
 $|B| = m = \poly n$, and a threshold parameter $d$. We are given the promise that there is at least one $A$-vertex of degree at least $d$. 
 The goal is to output an $A$-vertex together with at least $d/\alpha$ of its neighbours, for some approximation factor $\alpha \ge 1$. 
\end{problem}
\textsf{FEwW} allows us to model frequent elements problems where, besides the frequent elements themselves, additional satellite data that ``arrives'' together with the input items also needs to be reported. For example, in the 
 database log example above, database entries 
 can be regarded as $A$-vertices, users as $B$-vertices, and updates/queries as edges connecting entries to users. The incident edges of a node reported by an algorithm for \textsf{FEwW} can be regarded as a ``witness'' that proves that the node is indeed of large degree. The restriction $|B| = \poly n$ is only imposed for convenience as it is reasonable and simplifies the complexity bounds of our algorithms.
 
Our aim is to solve \textsf{FEwW} in two models of graph streams: In the \emph{insertion-only} model, a streaming algorithm receives the edges of $G$ in arbitrary order. In the \emph{insertion-deletion} model, a streaming algorithm receives an arbitrary sequence of edge insertion and deletions. In both models, the objective is to design algorithms with minimal space.

 Formulating \textsf{FEwW} as a graph problem has two advantages: First, it allows for the same satellite data of different input items. Second, a streaming algorithm for \textsf{FEwW} can be used to solve the related \textsf{Star Detection} problem, a subgraph detection problem that deserves attention in its own right:
 \begin{problem}[\textsf{Star Detection}]
 In \textsf{Star Detection}, the input is a general graph $G=(V, E)$. The objective is to output the largest star in $G$, i.e., determining a node of largest degree together with its neighbourhood. An $\alpha$-approximation algorithm ($\alpha \ge 1$) to 
 \textsf{Star Detection} outputs a node together with at least $\Delta / \alpha$ of its neighbours, where $\Delta$ is the maximum degree in the input graph. 
\end{problem}
For example, \textsf{Star Detection} can be used to solve the second example mentioned above, i.e., finding
influencers together with their followers in social networks.  
 
\subsection{Our Results}
In this paper, we resolve the space complexity of streaming algorithms for \textsf{FEwW} in both insertion-only and insertion-deletion streams up to poly-logarithmic factors. 
 
In insertion-only streams, we give an $\alpha$-approximation streaming algorithm with space\footnote{We use $\tilde{O}$, $\tilde{\Theta}$, and $\tilde{\Omega}$ to mean $O$, $\Theta$, and $\Omega$ (respectively) with poly-log factors suppressed.} $\OrderT(n + n^{\frac{1}{\alpha}} d)$ that succeeds 
 with high probability\footnote{We say that an event occurs with high probability (in short: w.h.p.) if it happens with probability at least 
 $1-\frac{1}{n}$, where $n$ is a suitable parameter associated with the input size.}, for integral values of $\alpha \ge 1$ (\textbf{Theorem~\ref{thm:insertion-only-upper-bound}}). 
 We complement this result with a lower bound, showing 
 that space $\Omega(n/ \alpha^2 + n^{\frac{1}{\alpha-1}} d / \alpha^2 )$ is necessary for every algorithm that computes a 
 $\alpha / 1.01$ approximation, for every integer $\alpha \ge 2$ (\textbf{Theorems~\ref{thm:lb-set-disjointness} and \ref{thm:insertion-only-lower-bound}}). 
 Observe that the latter result also implies a lower bound of $\Omega(n/ \alpha^2 + n^{\frac{1}{\alpha}} d / \alpha^2 )$ for every $\alpha$-approximation
 algorithm, where $\alpha$ is integral.
 Up to poly-logarithmic factors, our algorithm is thus optimal for every poly-logarithmic $\alpha$.
 
 In insertion-deletion streams, we give an $\alpha$-approximation streaming algorithm with space 
 $\OrderT(\frac{dn}{\alpha^2})$ if $\alpha \le \sqrt{n}$, and space $\OrderT(\frac{\sqrt{n}d}{\alpha})$ if $\alpha > \sqrt{n}$ that succeeds w.h.p.
 (\textbf{Theorem~\ref{thm:insertion-deletion-upper-bound}}). 
 We complement our algorithm with a lower bound showing that space 
 $\OmegaT(\frac{dn}{\alpha^2})$ is required (\textbf{Theorem~\ref{thm:insertion-deletion-lower-bound}}), 
 which renders our algorithm optimal (if $\alpha \le \sqrt{n}$) up to poly-logarithmic factors. 
 
 

 
 Our lower bounds translate to \textsf{Star Detection} with parameter $d = \Theta(n)$, and 
 our algorithms translate to \textsf{Star Detection} by setting $d = \Theta(n)$ in the space bound and by introducing 
 an additional $\log_{1+\epsilon} n$ factor in the space complexities 
 and a $1+\epsilon$ factor in the approximation ratios (\textbf{Lemma~\ref{lem:neighbourhood-detection-to-star-detection}}). For example, a $\Order(\log n)$-approximation to \textsf{Star Detection}
 can be computed in insertion-only streams in space $\OrderT(n)$ (graph streaming algorithms with space $\OrderT(n)$ are referred to as {\em semi-streaming algorithms} \cite{fkmsz05}), while such an approximation would require space $\OmegaT(n^2)$ in insertion-deletion streams. 
 
\subsection{Techniques}
Our insertion-only streaming algorithm for \textsf{FEwW} makes use of a subroutine that solves the following sampling task: For degree bounds $d_1 < d_2$ and an integer $s$, compute a uniform random sample of size $s$ of the $A$-vertices of degree at least $d_1$, and, for every sampled vertex $a \in A$, compute $\min \{ d_2, \deg(a) - d_1 + 1 \}$ incident edges to $a$. We say that this task \emph{succeeds} if there is one sampled node for which $d_2$ incident edges are computed. We give a streaming algorithm, denoted \textsc{Deg-Res-Sampling}$(d_1, d_2, s)$, that solves this task, using a combination of reservoir sampling \cite{v85} and degree counts. Next, we run $\alpha$ instances  of \textsc{Deg-Res-Sampling}$(d_1, d_2, s)$ in parallel, for changing parameter $d_1 = i \cdot \frac{d}{\alpha}$, for $i = 0, 1, \dots, \alpha -1$, and fixed parameters $d_2 = \frac{d}{\alpha}$ and $s = \ThetaT(n^{1/\alpha})$. It can be seen that run $i$ succeeds if the ratio of the number of nodes of degree at least $i \cdot \frac{d}{\alpha}$ to the number of nodes of degree at least $(i+1) \cdot \frac{d}{\alpha}$ in the input graph is not too large, i.e., in $\Order(n^{1/\alpha})$. We prove that this condition is necessarily fulfilled for at least one of the parallel runs.

Our lower bound for insertion-only streams is the most technical contribution of this paper. We show that a streaming algorithm for \textsf{FEwW} can be used to solve a new multi-party one-way communication problem denoted \textsf{Bit-Vector-Learning}, where the bits of multiple binary strings of different lengths are partitioned among multiple parties. The last party is required to output enough bits of at least one of the strings - this is difficult, since the partitioning is done so that not a single party alone holds enough bits of any of the strings. We prove a lower bound on the communication complexity of \textsf{Bit-Vector-Learning} via information theoretic arguments, which then translates to \textsf{FEwW}. A highlight of our technique is the application of Baranyai's theorem for colouring complete regular hypergraphs \cite{b79}, which allows us to partition and subsequently quantify the information that is necessarily revealed when solving \textsf{Bit-Vector-Learning}. 

\textsf{FEwW} is much harder to solve in insertion-deletion streams and requires a different set of techniques. Our insertion-deletion streaming algorithm employs two sampling strategies: A vertex-based sampling strategy that succeeds if the input graph is dense enough, and an edge sampling strategy that succeeds if the input graph is relatively sparse. We implement both sampling methods using $l_0$-sampling techniques \cite{jst11}. 

Last, our lower bound for insertion-deletion streams is proved in the one-way two-party communication model and is conceptually interesting since it extends the traditional one-way two-party \textsf{Augmented-Index} communication problem to a suitable two dimensional version that may be of independent interest. Similar to our lower bound for insertion-only streams, we use information-theoretic arguments to prove a tight lower bound.

\subsection{Further Related Work} 
As previousy mentioned, the traditional (without witnesses) \textsf{FE} problem is very well studied, and many algorithms with different properties are known, including Misra-Gries \cite{mg82} (see also \cite{dlm02,ksp03}), CountSketch \cite{ccf04}, Count-min Sketch \cite{cm05}, multi-stage Bloom filters \cite{cfm09}, and many others \cite{mm02,ev03,maa05,kx06,bics10}. A crucial difference between the witness and the without witness versions is that the space complexities behave inversely with regards to the desired frequency threshold parameter $1 \le d \le m$: Most streaming algorithms for \textsf{FE} use space proportional to $\frac{m}{d}$ (intuitively, the more often an element appears in the stream, the easier it is to pick it up using sampling), while the space is trivially $\Omega(d / \alpha)$ for \textsf{FEwW}, since at least $d / \alpha$ witnesses need to be reported by the algorithm. In terms of techniques, the two versions therefore have a very different flavour, and the \textsf{FEwW} problem is perhaps closer in spirit to the literature on graph streaming algorithms than to the (without witnesses) frequent elements literature.

Graph streaming algorithms in the insertion-only model have been studied since more than 20 years \cite{hrr99}, and this model is fairly well understood today (see \cite{m14} for an excellent survey). The first techniques for processing insertion-deletion graph streams were introduced in a seminal paper by Ahn et al. \cite{agm12} in 2012. While many problems, such as \textsf{Connectivity} \cite{agm12}, \textsf{Spectral Sparsification} \cite{kmmmnst20}, and \textsf{$(\Delta+1)$-colouring} \cite{ack19}, are known to be equally hard in both the insertion-only and the insertion-deletion settings (up to a poly-logarithmic factor difference in the space requirements), only few problems, such as \textsf{Maximum Matching} and \textsf{Minimum Vertex Cover}, are known to be substantially harder in the insertion-deletion setting \cite{k15,akly16,dk20}. In this paper, we prove that \textsf{FEwW} and \textsf{Star Detection} are much harder in the insertion-deletion setting than in the insertion-only setting, thereby establishing another separation result between the two models.

\textsf{Star Detection} shares similarities with other subgraph approximation problems, such as \textsf{Maximum Independent Set}/\textsf{Maximum Clique} \cite{hssw12,cdk19}, \textsf{Maximum Matching} \cite{fkmsz05,kmm12,k15,akly16,dk20}, and \textsf{Minimum Vertex Cover} \cite{dk20}, which can all be solved approximately using sublinear (in $n^2$) space in both the insertion-only and insertion-deletion settings. 

\subsection{Outline}
We start with notations and definitions in Section~\ref{sec:preliminaries}. This section also introduces the necessary context on communication complexity needed in this work.
In Section~\ref{sec:alg-insertion-only}, we give our insertion-only streams algorithm, and in 
Section~\ref{sec:lb-insertion-only}, we present a matching lower bound. Our algorithm for 
insertion-deletion streams is given in Section~\ref{sec:alg-insertion-deletion}, and we conclude 
with a matching lower bound in Section~\ref{sec:lb-insertion-deletion}.

\section{Preliminaries} \label{sec:preliminaries}
We consider simple bipartite graphs $G = (A, B, E)$ with $|A| = n$
and $|B| = m = \poly(n)$. The maximum degree of an $A$-node is denoted by $\Delta$.
We say that a tuple $(a, S) \in A \times 2^B$ is a {\em neighbourhood} in $G$ if 
$S \subseteq \Gamma(a)$. The size $|(a, S)|$ of $(a, S)$ is defined as $|(a, S)| = |S|$.
Using this terminology, the objective of \textsf{FEwW} is to output a neighbourhood of size at least $d/\alpha$.

Let $A$ be a random variable distributed according to $\mathcal{D}$. 
The {\em Shannon Entropy} of $A$ is denoted by $H_{\mathcal{D}}(A)$,
or simply $H(A)$ if the distribution $\mathcal{D}$ is clear from the context. The {\em mutual information}
of two jointly distributed random variables $A, B$ with distribution $\mathcal{D}$ is denoted by 
$I_{\mathcal{D}}(A, B) := H_{\mathcal{D}}(A) - H_{\mathcal{D}}(A \ | \ B)$ 
(again, $\mathcal{D}$ may be dropped),
where $H_{\mathcal{D}}(A \ | \ B)$ is the entropy of $A$ conditioned on $B$.
For an overview on information theory we refer the reader to \cite{ct06}.

\subsection*{Communication Complexity} \label{sec:communication-complexity}
We now provide the necessary context on communication complexity 
(see \cite{kn06} for more information).

In the \textit{one-way $p$-party communication model}, for $p \ge 2$, $p$ parties $P_1, P_2, \dots, P_p$ communicate with each
other to jointly solve a problem. Each party $P_i$ holds their own private input $X_i$ and has access to both private
and public random coins. Communication is one-way: $P_1$ sends a message $M_1$ to $P_2$, who then sends a message 
$M_2$ to $P_3$. This process continues until $P_p$ receives a message $M_{p-1}$ from $P_{p-1}$ and then 
outputs the result.


The way the parties interact is specified by a communication protocol $\Pi$. We say that $\Pi$ is an $\epsilon$-error
protocol for a problem $\mathsf{Prob}$ if it is correct with probability $1-\epsilon$ on any input 
$(X_1, X_2, \dots, X_p)$ that is valid for $\mathsf{Prob}$, where the probability is taken over the randomness
(both private and public) used by the protocol. 
The \textit{communication cost} of $\Pi$ is the size of the longest message sent by any of the parties, 
that is, $\max_{1 \le i \le p - 1} \{|M_i| \}$, where $|M_i|$ is the maximum length of message 
$M_i$. The {\em randomized one-way communication complexity} $R_{\epsilon}^{\rightarrow}(\mathsf{Prob})$ of a problem 
$\textsf{Prob}$ is the minimum communication cost among all $\epsilon$-error protocols $\Pi$.

Let $\mathcal{D}$ be any input distribution for a specific problem $\mathsf{Prob}$. The 
{\em distributional one-way communication complexity} of $\mathsf{Prob}$, denoted $D_{\mathcal{D}, \epsilon}^{\rightarrow}(\mathsf{Prob})$,
is the minimum communication cost among all deterministic communication protocols for $\mathsf{Prob}$ that succeed
with probability at least $1-\epsilon$, where the probability is taken over the input distribution $\mathcal{D}$.
In order to prove lower bounds on $R_{\epsilon}^{\rightarrow}(\mathsf{Prob})$, by Yao's lemma it is enough to 
bound the distributional communication complexity for any suitable input distribution since
$R_{\epsilon}^{\rightarrow}(\mathsf{Prob}) = \max_{\mathcal{D}} D_{\mathcal{D}, \epsilon}^{\rightarrow}(\mathsf{Prob})$.
In our lower bound arguments we will therefore consider deterministic protocols with distributional error. 
This is mainly for convenience as this allows us to disregard public and private coins. We note, however, 
that with additional care about private and public coins, our arguments also directly apply to randomized protocols.

Our lower bound arguments follow the {\em information complexity} paradigm. There are various definitions 
of information complexity (e.g. \cite{bjks02,bjks022,cswy01}), and for the sake of simplicity we will in fact omit a precise definition. 
Information complexity arguments typically measure the amount 
of information revealed by a communication protocol about the inputs of the participating parties.
This quantity is a natural lower bound on the total amount of communication, since the amount of information revealed
cannot exceed the number of bits exchanged. We will follow this approach in that we give lower bounds on quantities of the 
form $I_{\mathcal{D}}(X_i \ : \ M_j)$, for some $j \ge i+1$. This then implies a lower bound on the communication complexity
of a specific problem $\mathsf{Prob}$ since $I_{\mathcal{D}}(X_i \ : \ M_j) \le H_{\mathcal{D}}(M_j) \le |M_j|$ holds 
for any protocol. 


\section{Algorithm for Insertion-only Streams} \label{sec:alg-insertion-only}
Before presenting our algorithm for \textsf{FEwW} in insertion-only streams, we discuss 
a sampling subroutine that combines reservoir sampling with degree counts. 

\subsection{Degree-based Reservoir Sampling}
The subroutine \textsc{Deg-Res-Sampling}$(d_1, d_2, s)$ samples $s$ nodes uniformly at random from the set of 
nodes of degree at least $d_1$, and for each of these nodes computes a neighbourhood of size 
$\min \{ d_2, \deg - d_1 + 1 \}$, where $\deg$ is the degree of the respective node. If at least one neighbourhood
of size $d_2$ is found then we say that the algorithm {\em succeeds} and returns an arbitrary neighbourhood 
among the stored neighbourhoods of sizes $d_2$. Otherwise, we say that the algorithm {\em fails} and it reports \texttt{fail}.

This is achieved as follows: While processing the stream of edges, the degrees
of all $A$-vertices are maintained. The algorithm maintains a reservoir of size $s$ that fulfils the invariant that, 
at any moment, it contains a uniform sample of size $s$ of the set of nodes whose current degrees are at least $d_1$ (or, in case there
are fewer than $s$ such nodes, it contains all such nodes).
To this end, as soon as the degree of an $A$-vertex reaches $d_1$, the vertex is introduced into the reservoir 
with an appropriate probability (and another vertex is removed if the reservoir is already full), so as to maintain a uniform sample. 
Once a vertex is introduced into the reservoir, incident edges to this vertex are collected until $d_2$ such edges are found.

\begin{algorithm}
 \begin{algorithmic}[1]
   \REQUIRE Integral degree bounds $d_1$ and $d_2$, reservoir size $s$
   \STATE $R \gets \{ \}$ \COMMENT{reservoir}, $S \gets \{ \}$ \COMMENT{collected edges}, $x \gets 0$ \COMMENT{counter for nodes of degree $\ge d_1$}
   \WHILE{stream not empty}
    \STATE Let $ab$ be next edge in stream
    \STATE Increment degree $\deg(a)$ by one
    \IF[candidate to be inserted into reservoir]{$\deg(a) = d_1$}
     \STATE $x \gets x+1$
     \IF[reservoir not yet full]{$|R| < s$}
      \STATE $R \gets R \cup \{a\}$
     \ELSE[reservoir full]
      \IF[insert $a$ into reservoir with prob. $\frac{s}{x}$]{\textsc{Coin}$\displaystyle (\frac{s}{x})$}
      \STATE Let $a'$ be a uniform random element in $R$
      \STATE $R \gets (R \setminus \{a'\}) \cup \{a\}$, delete all edges incident to $a'$ from $S$
      \ENDIF
     \ENDIF     
    \ENDIF
    \IF[collect edge]{$a \in R$ and $\deg_S(a) < d_2$}
     \STATE $S \gets S \cup \{ab\}$
    \ENDIF    
   \ENDWHILE
   \RETURN Arbitrary neighbourhood among those of size $d_2$ in $S$, if there is none return $\texttt{fail}$
 \end{algorithmic}
 \caption{\textsc{Deg-Res-Sampling}$(d_1, d_2, s)$ \label{alg:deg-res-sampling}}
\end{algorithm}
The description of Algorithm~\ref{alg:deg-res-sampling} assumes that we have a function $\textsc{Coin}(p)$ to our disposal that 
outputs \texttt{true} with probability $p$ and \texttt{false} with probability $1-p$.

Disregarding the maintenance of the vertex degrees, the algorithm uses space $O(s d_2 \log n)$ since at most $d_2$ neighbours 
for each vertex in the reservoir are stored, and we account space $\Order(\log n)$ for storing an edge.


\begin{lemma}\label{lem:deg-res-samp}
 Suppose that $G$ contains at most $n_1$ $A$-nodes of degree at least $d_1$ and at least $n_2$ $A$-nodes of degree at least $d_1 + d_2 - 1$. Then,
 Algorithm~\textsc{Deg-Res-Sampling}$(d_1, d_2, s)$ succeeds with probability at least
 $$1 - (1 - \frac{s}{n_1})^{n_2} \ge 1 - e^{-\frac{s n_2}{n_1}} \ .$$ 
\end{lemma}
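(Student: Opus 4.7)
The plan is to reduce the success event of the algorithm to the purely combinatorial event that the final reservoir $R$ intersects the set $F$ of ``good'' $A$-nodes, namely those of final degree at least $d_1 + d_2 - 1$, and then to bound the probability of the complementary event using the standard reservoir-sampling invariant together with an elementary hypergeometric estimate.

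First I would establish the structural claim that the algorithm succeeds if and only if $R \cap F \neq \emptyset$ at the end of the stream. Two observations drive this: (i) each $A$-vertex $a$ becomes a candidate exactly once, at the moment $\deg(a)$ reaches $d_1$, since this event occurs a single time per vertex during the stream; and (ii) once a vertex is evicted from $R$, all of its collected edges are discarded and it is never re-inserted. Hence any vertex $a \in R$ at the end of the stream has resided in $R$ continuously since $\deg(a) = d_1$, so all $\min\{d_2, \deg(a) - d_1 + 1\}$ incident edges arriving from that moment onwards have been collected into $S$. For $a \in F$ this quantity is exactly $d_2$, and for $a \notin F$ it is strictly smaller, giving both directions of the equivalence.

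Next I would invoke the classical reservoir-sampling invariant \cite{v85}: conditional on the final candidate set, which has some size $x \le n_1$, the reservoir $R$ is a uniform sample without replacement of size $\min(s, x)$. The case $x \le s$ is immediate, since then $R$ contains every candidate and in particular all of $F$. Otherwise, writing $|F| \ge n_2$, I would express
\[
\Pr[R \cap F = \emptyset] \;=\; \frac{\binom{x - |F|}{s}}{\binom{x}{s}} \;=\; \prod_{i=0}^{|F|-1}\!\left(1 - \frac{s}{x - i}\right),
\]
and then use $|F| \ge n_2$ together with $x - i \le n_1$ to upper-bound this by $(1 - s/n_1)^{n_2}$. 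The second form of the bound follows immediately from $1 - t \le e^{-t}$.

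The main obstacle, such as it is, lies in the structural reduction rather than in the probabilistic estimate: one must verify carefully that an evicted vertex cannot produce a spurious neighbourhood of size $d_2$ in $S$ (it cannot, precisely because the algorithm deletes its stored edges upon eviction) and that the single-insertion rule triggered by $\deg(a) = d_1$ is compatible with the uniform reservoir invariant. Once these points are pinned down, the combinatorial calculation above is routine.
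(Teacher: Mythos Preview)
Your proposal is correct and follows essentially the same approach as the paper: both reduce success to the event that the final reservoir intersects the set of high-degree nodes, invoke the uniform-sample property of reservoir sampling, and bound the hypergeometric miss probability by $(1-s/n_1)^{n_2}$. Your treatment is in fact slightly more careful, since you explicitly justify the structural equivalence (success $\Leftrightarrow R\cap F\neq\emptyset$) and work with the actual counts $x,|F|$ before passing to the bounds $n_1,n_2$, whereas the paper plugs in $n_1,n_2$ directly.
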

\begin{proof}
 Let $D \subseteq V$ be the set of vertices of degree at least $d_1$ (then $|D| \le n_1$). 
 First, suppose that $d_1 \le s$. Then the algorithm stores all nodes of degree at least $d_1$ (including all nodes of degree $d_1 + d_2-1$)
 and collects its incident edges (except the first $d_1-1$ such edges). Hence, a neighbourhood of size $d_2$ is necessarily found.
 
 Otherwise, by well-known properties of reservoir sampling (e.g. \cite{v85}), at the end of the algorithm the set $R$ 
 constitutes a uniform random sample of $D$ of size $s$. The probability that no node of degree at least 
 $d_1 + d_2-1$ is sampled is at most:
 
 \begin{eqnarray*}
  \frac{{n_1 - n_2 \choose s}}{{n_1 \choose s}} & = & \frac{(n_1 -n_2)! (n_1 - s)!}{(n_1 - n_2 - s)! n_1 !} \\ 
  & = & \frac{(n_1 - s) \cdot (n_1 - s - 1) \cdot \ldots \cdot (n_1 - s - n_2 + 1)}{n_1 \cdot (n_1 - 1) \cdot \ldots \cdot (n_1 - n_2 + 1)} \\
  & \le & \left( \frac{n_1 - s}{n_1} \right)^{n_2} = (1 - \frac{s}{n_1})^{n_2} \le e^{- \frac{s n_2}{n_1}} \ . 
 \end{eqnarray*} 
\end{proof}

\subsection{Main Algorithm}
Our main algorithm runs the subroutine presented in the previous subsection in parallel for multiple different threshold values $d_1$.
We will prove that the existence of a node of degree $d$ implies that at least one of these runs will succeed with high probability.

\begin{algorithm}
 \begin{algorithmic}
  \REQUIRE Degree bound $d$, approximation factor $\alpha$
  \STATE $s \gets \lceil \ln(n) \cdot n^{\frac{1}{\alpha}} \rceil$
  \STATE \textbf{for} $i = 0 \dots \alpha-1$ \textbf{do in parallel} 
 \STATE $\quad$ $(a_i, S_i) \gets $ \textsc{Deg-Res-Sampling}$(\max \{1, i \cdot \frac{d}{\alpha} \}, \frac{d}{\alpha}, s)$
 \RETURN Any neighbourhood $(a_i, S_i)$ among the successful runs
\end{algorithmic}
\caption{$\alpha$-approximation Streaming Algorithm for \textsf{FEwW} \label{alg:one-pass}}
\end{algorithm}

\begin{theorem} \label{thm:insertion-only-upper-bound}
 Suppose that the input graph $G=(A, B, E)$ contains at least one $A$-node of degree at least $d$. For every integral $\alpha \ge 2$, 
 Algorithm~\ref{alg:one-pass} finds a neighbourhood of size $\frac{d}{\alpha}$ with probability at least $1-\frac{1}{n}$ and uses space
 $$\Order(n \log n + n^{\frac{1}{\alpha}} d \log^2 n) \ .$$
\end{theorem}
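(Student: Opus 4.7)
The plan is to show that, among the $\alpha$ parallel instances of \textsc{Deg-Res-Sampling}, at least one is guaranteed to succeed with probability $1-1/n$, and then to bound the total space by summing the individual space bounds.

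For the correctness argument, I would introduce the quantities $n_i := |\{a \in A : \deg(a) \ge \max\{1, i \cdot d/\alpha\}\}|$ for $i = 0, 1, \dots, \alpha$. Clearly $n_0 \le n$ (there are only $n$ nodes in $A$) and, by the promise of the input, $n_\alpha \ge 1$. The key observation is a pigeonhole-style estimate on the telescoping product
\[
\prod_{i=0}^{\alpha-1} \frac{n_i}{n_{i+1}} \;=\; \frac{n_0}{n_\alpha} \;\le\; n,
\]
which forces the existence of an index $i^\star \in \{0, 1, \dots, \alpha-1\}$ with $n_{i^\star}/n_{i^\star+1} \le n^{1/\alpha}$. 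For this index, the $i^\star$-th run calls \textsc{Deg-Res-Sampling}$(d_1, d_2, s)$ with $d_1 = \max\{1, i^\star d/\alpha\}$, $d_2 = d/\alpha$, and $s = \lceil \ln(n) \cdot n^{1/\alpha} \rceil$. Since every $A$-node of degree at least $(i^\star+1) d/\alpha \ge d_1 + d_2 - 1$ is counted by $n_{i^\star+1}$, Lemma~\ref{lem:deg-res-samp} (instantiated with the upper bound $n_{i^\star}$ and lower bound $n_{i^\star+1}$) gives success probability at least
\[
1 - \exp\!\Bigl(-\tfrac{s \, n_{i^\star+1}}{n_{i^\star}}\Bigr) \;\ge\; 1 - \exp\!\bigl(-s \, n^{-1/\alpha}\bigr) \;\ge\; 1 - e^{-\ln n} \;=\; 1 - \tfrac{1}{n}.
\]
Since Algorithm~\ref{alg:one-pass} returns a neighbourhood from any successful run, this establishes the probability guarantee; no union bound is required because a single successful run is enough.

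For the space bound, each instance of \textsc{Deg-Res-Sampling} stores at most $s \cdot d_2$ edges in its reservoir at a cost of $\Order(\log n)$ bits per edge, giving $\Order(s \, d_2 \log n) = \Order\!\bigl(n^{1/\alpha} \cdot (d/\alpha) \cdot \log^2 n\bigr)$ bits per run. Summing over the $\alpha$ parallel runs yields $\Order(n^{1/\alpha} d \log^2 n)$ bits for the reservoirs, and the shared degree counter for the $n$ vertices of $A$ contributes an additive $\Order(n \log n)$, matching the claimed space.

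The only mildly delicate step is identifying the right pigeonhole decomposition: one must align the geometric progression of thresholds $d_1 = i d/\alpha$ with the choice $s = \tilde\Theta(n^{1/\alpha})$ so that the worst-case ratio $n_{i^\star}/n_{i^\star+1} \le n^{1/\alpha}$ exactly cancels the sample size in the exponent. Once this alignment is in place, the rest of the proof consists of invoking Lemma~\ref{lem:deg-res-samp} and routine accounting; the edge case $i=0$ (where $d_1$ is clamped to $1$ rather than $0$) is harmless because $n_0$ still upper-bounds the number of nodes with $\deg \ge 1$ and the product telescopes correctly.
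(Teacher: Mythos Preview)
Your proof is correct and follows essentially the same approach as the paper. The only stylistic difference is that the paper argues by contradiction (assuming every run fails with probability greater than $1/n$ and deriving $s < n^{1/\alpha}\ln n$), whereas you use a direct telescoping-product/pigeonhole argument to locate the good index $i^\star$; the underlying mathematics is identical.
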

\begin{proof}
Concerning the space bound, the algorithm needs to keep track of the degrees of all $A$-vertices which requires space $O(n \log n)$ (using
the assumption $m = \poly n$). The algorithm runs the subroutine \textsc{Deg-Res-Sampling} (Algorithm~\ref{alg:deg-res-sampling}) $\alpha$ times in parallel. Each of these runs
requires space $\Order(s \cdot \frac{d}{\alpha} \log n)$. Besides the vertex degrees, we thus have an additional space requirement of 
$O(s \cdot d \log n) = O(n^{\frac{1}{\alpha}} d \log^2 n)$ bits, which justifies the space requirements.

Concerning correctness, let $n_0$ be the number of $A$-nodes of degree at least $1$, and for $i \ge 1$, let $n_i$ be the number of $A$-nodes of degree at least
$i \cdot \frac{d}{\alpha}$. Observe that $n \ge n_0 \ge n_1 \ge n_2 \ge \dots \ge n_{\alpha} \ge 1$, where the last inequality follows from 
the assumption that the input graph contains at least one $A$-node of degree at least $d$.
 
 We will prove that at least one of the runs succeeds with probability at least $1 - \frac{1}{n}$. For the sake of a contradiction,
 assume that the error probability of every run is strictly larger than $\frac{1}{n}$. Then, using Lemma~\ref{lem:deg-res-samp}, we obtain for
 every $0 \le i \le \alpha-1$:
 \begin{eqnarray*}
  e^{-\frac{s n_{i+1}}{ n_i}} & > & \frac{1}{n} \ , \mbox{ which implies} \\
  n_{i+1} & < & \frac{\ln(n) n_i}{s} \ . 
 \end{eqnarray*} 
 Since $n_0 \le n$ we obtain:
 \begin{eqnarray*}
  n_i & < & n \left( \frac{\ln n }{s} \right)^i \ ,
 \end{eqnarray*}
 and since $n_c \ge 1$ we have:
$$1 <  n \left( \frac{\ln n}{s} \right)^{\alpha} \quad \mbox{which implies} \quad s < n^{\frac{1}{\alpha}} \ln n \ .$$
However, since the reservoir size in Algorithm~\ref{alg:one-pass} is chosen to be $\lceil n^{\frac{1}{\alpha}} \ln n \rceil$, we obtain 
a contradiction. Hence, at least one run succeeds with probability $1 - 1/n$.
\end{proof}

\subsection{Extension to \textsf{Star Detection}}
Streaming algorithms for \textsf{FEwW} can be used to solve \textsf{Star Detection} with a small increase in space and approximation ratio.

\begin{lemma} \label{lem:neighbourhood-detection-to-star-detection}
 Let $\mathbf{A}$ be a one-pass $\alpha$-approximation streaming algorithm for \textsf{FEwW} with space $s(n,d)$ that succeeds with probability $1-\delta$. Then there exists
 a one-pass $(1+\epsilon)\alpha$-approximation streaming algorithm for \textsf{Star Detection} with space $\Order(s(n,n) \cdot \log_{1+\epsilon} n)$ that succeeds with probability $1-\delta$.
\end{lemma}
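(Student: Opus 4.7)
The plan is to reduce Star Detection on a general graph $G=(V,E)$ with $|V|=n$ to $O(\log_{1+\epsilon} n)$ parallel invocations of $\mathbf{A}$ on a suitable bipartite graph, one for each geometrically increasing guess of the maximum degree $\Delta$. First, I would construct a bipartite ``double cover'' on the fly: set $A = V$ and $B = V$, and whenever an edge $uv$ of $G$ arrives in the stream, feed the two bipartite edges $(u,v)$ and $(v,u)$ (with $u,v\in A$ on the $A$-side) into the simulated FEwW algorithm. This guarantees that the $A$-degree of any vertex $a$ equals its degree in $G$, and that the $A$-neighbourhood of $a$ equals its $G$-neighbourhood; moreover $|A|=|B|=n$, which trivially satisfies the $|B|=\poly n$ convention.

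Next, since $1 \le \Delta \le n-1$, I would guess $\Delta$ by trying the values $d_j = \lceil (1+\epsilon)^j \rceil$ for $j = 0, 1, \ldots, J$ with $J = \lceil \log_{1+\epsilon} n \rceil$, running an independent copy $\mathbf{A}_j$ of $\mathbf{A}$ with parameter $d=d_j$ on the bipartite stream defined above, all in parallel. At the end of the stream, each $\mathbf{A}_j$ either reports a neighbourhood $(a_j, S_j)$ with $|S_j| \ge d_j/\alpha$, or I treat it as ``failed''. I would then output the pair $(a_{j^*}, S_{j^*})$ from the successful run with the \emph{largest} index $j^*$, which is a purely syntactic check on the outputs and requires no knowledge of $\Delta$.

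For correctness, let $j_0$ be the largest index with $d_{j_0} \le \Delta$; then $d_{j_0} \ge \Delta / (1+\epsilon)$. Since the promise of FEwW is satisfied for run $j_0$, it succeeds with probability at least $1-\delta$ and outputs a neighbourhood of size at least $d_{j_0}/\alpha \ge \Delta / \bigl((1+\epsilon)\alpha\bigr)$. Whichever run is actually selected has $j^* \ge j_0$, so its reported neighbourhood has size at least $d_{j^*}/\alpha \ge d_{j_0}/\alpha$, giving an $(1+\epsilon)\alpha$-approximation for Star Detection with the same success probability $1-\delta$ (no union bound is needed since we only rely on run $j_0$). The space is simply the sum over the $O(\log_{1+\epsilon} n)$ parallel copies, each using at most $s(n, d_j) \le s(n, n)$ bits, yielding $O\bigl(s(n,n) \cdot \log_{1+\epsilon} n\bigr)$. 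No step looks like a serious obstacle; the only delicate point is the bipartite encoding of the general-graph stream and the observation that picking the largest successful $d_j$ requires no extra information about $\Delta$.
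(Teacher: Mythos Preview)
Your proposal is correct and follows essentially the same approach as the paper: the bipartite double-cover encoding of $G$ and the geometric guessing of $\Delta$ via $O(\log_{1+\epsilon} n)$ parallel copies of $\mathbf{A}$ are exactly what the paper does. You are in fact more explicit than the paper about how to select the final output among the parallel runs; the paper simply observes that the run with the largest $\Delta' \le \Delta$ succeeds, without specifying a selection rule.
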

\begin{proof}
 Let $G=(V, E)$ be the graph described by the input stream in an instance of \textsf{Star Detection}. 
We use $\Order(\log_{1+\epsilon} n)$ guesses $\Delta' \in \{1, 1+\epsilon, (1+\epsilon)^2,  \dots, (1+\epsilon)^{\lceil \log_{1+\epsilon} n \rceil} \}$ for $\Delta$, the maximum degree in the input graph. For each guess $\Delta'$ we run algorithm $\mathbf{A}$ for 
\textsf{FEwW} with threshold value $d= \Delta'$ on the bipartite graph $H=(V, V, E')$, where for every edge $uv$ in the input stream, we include the two edges $uv$ and $vu$ into $H$. 
 
 Consider the run with the largest value for $\Delta'$ that is not larger than $\Delta$. Then, $\Delta' \ge \Delta / (1+\epsilon)$.
 This run finds a neighbourhood of size at least $\Delta' / \alpha \ge \Delta / (\alpha (1+\epsilon))$ and thus a large star in $G$.
\end{proof}

The previous result in combination with Theorem~\ref{thm:insertion-only-upper-bound} can be used to obtain a semi-streaming algorithm for \textsf{Star Detection} (by using any fixed constant $\epsilon$ and $\alpha = \log n$ in the previous lemma).
\begin{corollary} \label{cor:star-detection}
 There is a semi-streaming $\Order(\log n)$-approximation algorithm for \textsf{Star Detection} that succeeds with high probability.
\end{corollary}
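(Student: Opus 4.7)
The plan is to instantiate Lemma~\ref{lem:neighbourhood-detection-to-star-detection} with the insertion-only \textsf{FEwW} algorithm from Theorem~\ref{thm:insertion-only-upper-bound}, choosing the approximation parameter so that the derived \textsf{Star Detection} algorithm is both semi-streaming and $\Order(\log n)$-approximate. The corollary is really a parameter instantiation, so I expect no substantial obstacle; I just need to make sure the two parameters (the \textsf{FEwW} approximation factor $\alpha$ inside and the $1+\epsilon$ blow-up from the reduction) are chosen consistently.

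First I would set $\alpha = \lceil \log n \rceil$ in Theorem~\ref{thm:insertion-only-upper-bound}, relying on the key identity $n^{1/\alpha} = n^{1/\log n} = \Order(1)$ (e.g.\ equal to $2$ with binary logarithm). Substituting this into the space bound $\Order(n \log n + n^{1/\alpha} d \log^2 n)$ collapses the \textsf{FEwW} space to $s(n,d) = \Order(n \log n + d \log^2 n) = \OrderT(n + d)$, while the success probability of at least $1 - 1/n$ is preserved.

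Next I would apply Lemma~\ref{lem:neighbourhood-detection-to-star-detection} with any fixed constant $\epsilon$, say $\epsilon = 1$. The lemma prescribes substituting $d = n$ into the space bound, giving $s(n,n) = \OrderT(n)$, and introduces a multiplicative $\log_{1+\epsilon} n = \Order(\log n)$ factor from the geometric search over the unknown maximum degree $\Delta$. The total space of the resulting \textsf{Star Detection} algorithm is therefore $\Order(s(n,n) \cdot \log_{1+\epsilon} n) = \OrderT(n)$, which qualifies as semi-streaming, and the success probability passes through unchanged.

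Finally, the approximation guarantee follows directly from Lemma~\ref{lem:neighbourhood-detection-to-star-detection}: the output is a $(1+\epsilon)\alpha$-approximation, which for our choice evaluates to $2 \lceil \log n \rceil = \Order(\log n)$. Combining the three observations yields exactly the stated corollary. The only subtlety worth highlighting in the write-up is the $n^{1/\log n} = \Order(1)$ step, which is what makes $\alpha = \log n$ the right sweet spot where the \textsf{FEwW} term $n^{1/\alpha} d$ degenerates to (almost) $d = n$ without exceeding the semi-streaming budget.
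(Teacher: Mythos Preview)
Your proposal is correct and matches the paper's approach exactly: the paper also instantiates Lemma~\ref{lem:neighbourhood-detection-to-star-detection} with Theorem~\ref{thm:insertion-only-upper-bound} using any fixed constant $\epsilon$ and $\alpha = \log n$. Your write-up simply fills in the arithmetic (notably $n^{1/\log n} = \Order(1)$) that the paper leaves implicit.
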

%

\section{Lower Bound for Insertion-only Streams} \label{sec:lb-insertion-only}
In this section, we first point out that a simple $\Omega(n / \alpha^2)$ lower 
bound follows from the one-way communication complexity of a multi-party version of the \textsf{Set-Disjointness} problem. 
Next, we give some important inequalities involving entropy and mutual information that are used subsequently.
Then, we prove our main lower bound result of this section. 
To this end, we first define the multi-party one-way communication problem \textsf{Bit-Vector Learning} 
and prove a lower bound on its communication complexity. We then show that a streaming algorithm for \textsf{FEwW} 
yields a protocol for \textsf{Bit-Vector Learning}, which gives the desired lower bound.

\subsection{An $\Omega(n/\alpha^2)$ Lower Bound via Multi-party Set-Disjointness}
Consider the one-way multi-party version of the well-known \textsf{Set-Disjointness} problem:
\begin{problem}[$\textsf{Set-Disjointness}_p$]
 $\textsf{Set-Disjointness}_p$ is a $p$-party communication problem where every party $i$ holds 
 a subset $S_i \subseteq \mathcal{U}$ of a universe $\mathcal{U}$ of size $n$. The parties are given 
 the promise that either their sets are pairwise disjoint, i.e., $S_i \cap S_j = \varnothing$ for every 
 $i \neq j$, or they {\em uniquely} intersect, i.e., $|\cap_i S_i| = 1$. The goal is to determine which is the case.
\end{problem}
It is known that every $\epsilon$-error protocol for $\textsf{Set-Disjointness}_p$ requires a total communication of 
$\Omega(n / p)$ bits \cite{cks03}. Since our notion of one-way multi-party communication complexity measures the 
maximum length of any message sent in an optimal protocol, we obtain:
$$R_{\epsilon}^{\rightarrow}(\textsf{Set-Disjointness}_p) = \Omega(n / p^2) \ .$$

We now argue that an algorithm for $\textsf{FEwW}$ can be used to solve $\textsf{Set-Disjointness}_p$.

\begin{theorem} \label{thm:lb-set-disjointness}
 Every $\alpha/1.01$-approximation streaming algorithm for \textsf{FEwW}$(n, d)$ requires space $\Omega(n / \alpha^2)$, for any integral $\alpha$ and for any $d = k \cdot \alpha$, where $k$ is a positive integer.
\end{theorem}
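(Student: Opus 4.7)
My plan is to reduce one-way $\textsf{Set-Disjointness}_\alpha$ on a universe of size $n$ directly to $\textsf{FEwW}(n,d)$ with $d = k\alpha$. Given sets $S_1, \dots, S_\alpha \subseteq [n]$ satisfying the promise of pairwise disjointness or unique intersection, I would set $A = [n]$ and, for each party $i$ and each $x \in S_i$, let party $i$ insert $k$ edges from $x$ to $k$ freshly created $(i,x)$-unique $B$-vertices (so that the graph remains simple and $|B| \le k\alpha n = dn = \poly(n)$ as required). The $\alpha$ parties then simulate a hypothetical $\alpha/1.01$-approximation streaming algorithm $\mathcal{A}$ for $\textsf{FEwW}(n,d)$ in order: party $i$ runs $\mathcal{A}$ on its edges and forwards the working memory to party $i+1$; party $\alpha$ finishes $\mathcal{A}$ and returns ``uniquely intersecting'' iff $\mathcal{A}$'s output neighbourhood $(v,W)$ satisfies $|W| \ge k+1$. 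Each transmitted state has length at most $S(\mathcal{A})$, so plugging in the stated bound $R^{\rightarrow}_\epsilon(\textsf{Set-Disjointness}_\alpha) = \Omega(n/\alpha^2)$ yields the desired space lower bound.

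Correctness hinges on the strict arithmetic gap $\lceil 1.01\, k \rceil \ge k+1$, which holds for every integer $k \ge 1$. In the unique-intersection case, the intersection element $x^{\star}$ belongs to every $S_i$ and therefore has degree exactly $k\alpha = d$, satisfying the promise, whereas every other $A$-vertex lies in at most one $S_i$ and hence has degree at most $k$. The approximation guarantee forces $|W| \ge \lceil 1.01 \cdot d/\alpha \rceil \ge k+1$, so the reported vertex $v$ must itself possess at least $k+1$ real neighbours, which identifies $v = x^{\star}$ and triggers the ``intersecting'' verdict with high probability. In the pairwise-disjoint case, every $A$-vertex has degree at most $k$, so $\mathcal{A}$'s memory -- a faithful subset of the actual stream edges -- simply cannot contain $k+1$ edges incident to any common $A$-vertex, and no valid output meeting the approximation threshold can possibly be produced. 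Hence the protocol is correct whenever $\mathcal{A}$ succeeds: with high probability in the intersecting case, and deterministically in the disjoint case.

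The main obstacle is engineering the gap-based distinguishability while simultaneously respecting the promise: the construction must guarantee the existence of a degree-$d$ vertex in one case but preclude any vertex with more than $k$ neighbours in the other, without having to add artificial ``canary'' vertices that the algorithm could return in place of $x^{\star}$. The $1.01$ slack in the approximation factor is precisely what makes this possible, because it converts the approximation requirement $d/(\alpha/1.01) = 1.01\,k$ into an integer witness threshold $k+1$ that strictly exceeds the maximum degree $k$ achievable in the disjoint case. The only technicality is the standard convention that the witnesses reported by a streaming algorithm are genuine edges stored in its memory, which ensures $|W| \le \max_a \deg(a)$ and hence removes the need for any promise-patching gadget.
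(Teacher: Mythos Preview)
Your reduction is correct and matches the paper's approach: the paper also has each party $i$ contribute $k$ edges for every $x\in S_i$ (using a shared set $B=[d]$ with disjoint blocks rather than your fresh $(i,x)$-vertices, but this is cosmetic) and distinguishes the two cases via the threshold $|W|\ge k+1$. One minor inaccuracy worth flagging: in the unique-intersection case the promise $|\cap_i S_i|=1$ does \emph{not} guarantee that every other element lies in at most one $S_i$, so your conclusion ``$v=x^\star$'' need not hold; fortunately you never use it, since your verdict depends only on $|W|\ge k+1$, which the approximation guarantee forces regardless of which high-degree vertex $\mathcal{A}$ reports.
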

\begin{proof}
 Let $(S_1, S_2, \dots, S_p)$ be an instance of $\textsf{Set-Disjointness}_p$. For $\alpha = p / 1.01$, let $\mathbf{A}$ be an $\alpha$-approximation
 streaming algorithm for \textsf{FEwW}, and let $d = k \cdot p$, for some integer $k \ge 1$. The parties use 
 $\mathbf{A}$ to solve $\textsf{Set-Disjointness}_p$ as follows: The $p$-parties construct a graph $G=(\mathcal{U}, B, E)$ 
 with $B = [d]$ and $E = \cup_{i=1}^p E_i$. 
 Each party $i$ translates $S_i$ into the set of edges $E_i$ where for each $u \in S_i$ the edges 
 $\{ ub \ : \ b \in \{ (i-1) d/p + 1, \dots, i d/p \} \}$
 are included in $E_i$. Observe that $\Delta = d/p = k$ if all sets $S_i$ are pairwise disjoint, and $\Delta = d = k \cdot p$ 
 if they uniquely intersect. Party $1$ now simulates $\mathbf{A}$ on their edges $E_1$, sends the resulting memory state to party $2$
 who continues running $\mathbf{A}$ on $E_2$. This continues until party $p$ completes the algorithm. Since $\mathbf{A}$
 is a $p/1.01$-approximation algorithm, if the sets uniquely intersect, the output of the algorithm is a neighbourhood
 of size at least $\lceil \frac{\Delta}{\alpha} \rceil = \lceil 1.01 \cdot k \rceil \ge k+1$. If the sets are disjoint, then no neighbourhood is of size larger than $k$. The 
 last party can thus distinguish between the two cases and solve $\textsf{Set-Disjointness}_p$. Since at least one message 
 used in the protocol is of length $\Omega(n / p^2)$, $\mathbf{A}$ uses space $\Omega(n / p^2) = \Omega(n / \alpha^2)$.
 
\end{proof}

\subsection{Inequalities Involving Entropy and Mutual Information} \label{sec:prelim}


In the following, we will use various properties of entropy and mutual information. The most important
ones are listed below: (let $A,B,C$ be jointly distributed random variables)
\begin{enumerate}[leftmargin=0.6cm]
 \item \emph{Chain Rule for Entropy:} $H(A B \ | \ C) = H(A \ | \ C) + H(B \ | \ AC)$
 \item \emph{Conditioning reduces Entropy:} $H(A) \ge H(A \ | \ B) \ge H(A \ | \ BC)$  \label{rule:4}
 \item \emph{Chain Rule for Mutual Information:} $I(A \ : \ B C) = I(A \ : \ B) + I(A \ : \ C \ | \ B)$ 
 \item \emph{Data Processing Inequality:}\footnote{Technically the data processing inequality is more general, however, the inequality stated here is sufficient for our purposes.} Suppose that $C$ is a deterministic function of $B$. Then: 
 $I(A \ : \ B) \ge I(A \ : \ C)$
 \item \emph{Independent Events:} Let $E$ be an event independent of $A, B, C$. Then: $I(A \ : \ B \ | \ C, E) = I(A \ : \ B \ | \ C)$
\end{enumerate}
We will also use the following claim: (see Claim 2.3. in \cite{akl16} for a proof)
\begin{lemma} \label{lem:indep-variables}
 Let $A,B,C,D$ be jointly distributed random variables so that $A$ and $D$ are independent conditioned on $C$. Then:
 $I(A \ : \ B \ | \  C D) \ge I(A \ : \ B \ | \ C)$.
\end{lemma}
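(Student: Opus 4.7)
The plan is to prove the inequality by expanding $I(A : BD \mid C)$ using the chain rule for mutual information in two different ways and comparing the two expressions. Specifically, I would write
\[
I(A : BD \mid C) = I(A : D \mid C) + I(A : B \mid CD)
\]
and also
\[
I(A : BD \mid C) = I(A : B \mid C) + I(A : D \mid BC).
\]
Setting these equal and rearranging gives
\[
I(A : B \mid CD) - I(A : B \mid C) = I(A : D \mid BC) - I(A : D \mid C).
\]

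Next, I would invoke the assumption that $A$ and $D$ are independent conditioned on $C$, which by definition means $I(A : D \mid C) = 0$. The right-hand side of the displayed equation therefore reduces to $I(A : D \mid BC)$, which is non-negative since conditional mutual information is always non-negative. Hence $I(A : B \mid CD) \ge I(A : B \mid C)$, as claimed.

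I do not expect any real obstacle here: the proof is a one-step chain-rule manipulation followed by the conditional-independence hypothesis and non-negativity of mutual information. The only mildly subtle point worth stating explicitly is that the conditional independence of $A$ and $D$ given $C$ does not imply their conditional independence given $B$ and $C$ (so $I(A : D \mid BC)$ need not vanish), but we only need one direction of the inequality, which is exactly what the chain-rule identity delivers.
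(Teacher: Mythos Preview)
Your proof is correct and is exactly the standard chain-rule argument for this fact. The paper itself does not give a proof of this lemma; it simply cites Claim~2.3 in \cite{akl16}, where the same one-line chain-rule manipulation you describe is used.
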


\subsection{Hard Communication Problem: \textsf{Bit-Vector Learning}}

We consider the following one-way $p$-party communication game:

\begin{problem}[\textsf{Bit-Vector Learning}$(p, n, k)$]\label{prob:bit-vector-learning}
 Let $X_1 = [n]$ and for every $2 \le i \le p$, let $X_i$ be a uniform random subset of 
 $X_{i-1}$ of size $n_i = n^{1-\frac{i-1}{p-1}}$.
 Furthermore, for every $1 \le i \le p$ and every $1 \le j \le n$, let 
 $Y_i^j \in \{0, 1 \}^k$ be a uniform random bit-string if $j \in X_i$, and let $Y_i^j = \epsilon$ 
 (the empty string) if $j \notin X_i$.
 For $j \in [n]$, let $Z^j = Y_1^j \circ Y_2^j \circ \dots \circ Y_p^j$ be the bit string obtained by concatenation.
  
 Party $i$ holds $X_i$ and $Y_i := Y_i^1, \dots, Y_i^n$. Communication is one way from party $1$ through party $p$ and 
 party $p$ needs to output an index $I \in [n]$ and at least $1.01k$ bits from  
 string $Z^I$. \footnote{More formally, the output is an index $I \in [n]$ and a set of tuples 
 $\{ (i_1, \tilde{Z}_{1}), (i_2, \tilde{Z}_{2}), \dots \}$
 of size at least $1.01k$ with $i_j \neq i_k$ for every $j \neq k$ so that $Z^I[i_j] = \tilde{Z}_{j}$, for every $j$. } 
\end{problem}
Observe that the previous definition also defines an input distribution. All subsequent entropy and mutual information 
terms refer to this distribution. 
An example instance of \textsf{Bit-Vector Learning}$(3, 4, 5)$ is given in Figure~\ref{fig:bit-vector-learning-main}.

\begin{figure}[h!]
\begin{tikzpicture}[scale=0.6]
  \node (a) at (0,0) {\begin{minipage}{2.7cm} \begin{center}\textbf{Alice} \\  \vspace{0.2cm} $X_1 = \{1, 2, 3, 4 \}$ \end{center} 
  \begin{eqnarray*}
   Y_1^1 & = & 10010 \\
   Y_1^2 & = & 01000 \\
   Y_1^3 & = & 01011 \\
   Y_1^4 & = & 01111 
  \end{eqnarray*}
\end{minipage}};

\node (a) at (5,0) {\begin{minipage}{2.7cm} \begin{center}\textbf{Bob} \\  \vspace{0.2cm} $X_2 = \{1, 4 \}$ \end{center} 
  \begin{eqnarray*}
   Y_2^1 & = & 11011 \\
   Y_2^2 & = & \epsilon \\
   Y_2^3 & = & \epsilon \\
   Y_2^4 & = & 01010 
  \end{eqnarray*}
\end{minipage}};

\node (a) at (10,0) {\begin{minipage}{2.7cm} \begin{center}\textbf{Charlie} \\  \vspace{0.2cm} $X_3 = \{ 4 \}$ \end{center} 
  \begin{eqnarray*}
   Y_3^1 & = & \epsilon \\
   Y_3^2 & = & \epsilon \\
   Y_3^3 & = & \epsilon \\
   Y_3^4 & = & 00011 
  \end{eqnarray*}
\end{minipage}};

\draw[->] (1.8,1.7) -- (3.1,1.7) node[midway,sloped, above] {$M_1$};
\draw[->] (6.8,1.7) -- (8.1,1.7) node[midway,sloped, above] {$M_2$};
 \end{tikzpicture}
\caption{Example instance of \textsf{Bit-Vector Learning}$(3, 4, 5)$. Charlie needs to output at least $1.01 \cdot 5$ positions (i.e., at least $6$ positions) of one 
of the strings $Z^1 = 1001011011$, $Z^2 = 01000$, $Z^3 = 01011$, or $Z^4 = 011110101000011$. 
\label{fig:bit-vector-learning-main}}
\end{figure}

In the following, for a subset $S \subseteq [n]$, we will use the notation $Y_i^{S}$, which refers to the strings $Y_i^{s_1}, Y_i^{s_2}, \dots, Y_i^ {s_{|S|}}$, where $S = \{s_1, s_2, \dots, s_{|S|} \}$.



Observe further that there is a protocol that requires no communication and outputs an index $I$ and $k$ bits of $Z^I$: Party $p$ simply outputs
the single element $I \in X_p$ together with the bit string $Y_p^I$. As our main result of this section we show that every protocol
that outputs at least $1.01k$ bits of any string $Z^i$ ($i \in [n]$) needs to send at 
least one message of length $\Omega( \frac{k n^{\frac{1}{p-1}}}{p})$.

\textit{Remark:} For ease of presentation, we will only consider values of $n$ so that $n^{\frac{1}{p-1}}$ is integral. This condition implies that $n_{i+1} \mid n_i$ for every $1 \le i \le p-1$ since $\frac{n_i}{n_{i+1}} = n^{\frac{1}{p-1}}$. The reason for this restriction is that we will apply Baranyai's theorem \cite{b79}, which is stated as Theorem~\ref{thm:baranyai} below and requires this property. This can be avoided using additional case distinctions.


\subsection{Lower Bound Proof for \textsf{Bit-Vector Learning}} \label{sec:lb-bit-vector-learning}
Fix now an arbitrary deterministic protocol $\Pi$ for \textsf{Bit-Vector Learning}$(p, n, k)$ with distributional error $\epsilon$.
Let $Out = (I, \tilde{Z}^I)$ denote the neighbourhood outputted by the protocol. Furthermore, denote by $M_i$ the message
sent from party $i$ to party $i+1$. Throughout this section let $s = \max_i |M_i|$.

Since the last party correctly identifies $1.01k$ bits of $Z^I$, the mutual information between $Z^I$ and all random 
variables known to the last party, that is, $M_{p-1}, X_p$ and $Y_p$, needs to be large. This is proved in the next lemma:

\begin{lemma} \label{lem:fano-type}
 We have:
 $$I(M_{p-1} X_p Y_p \ : \  Z^I) \ge (1-\epsilon) 1.01k - 1 \ . $$
\end{lemma}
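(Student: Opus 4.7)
The plan is a Fano-type argument. Let $W = M_{p-1} X_p Y_p$ denote everything known to party $p$, and write the output as $Out = (I, P, V)$, where $P$ is the set of $c \ge \lceil 1.01 k \rceil$ claimed positions and $V$ the corresponding claimed bit values; both are deterministic functions of $W$. By the data processing inequality,
$$I(W \ : \ Z^I) \ \ge \ I(Out \ : \ Z^I).$$

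Next, I would condition on $(I, P)$, which is itself a function of $Out$. A brief chain-rule computation shows that $I(Out \ : \ Z^I) - I(Out \ : \ Z^I \ | \ I, P) = I(Z^I \ : \ I, P) \ge 0$, so $I(Out \ : \ Z^I) \ge I(Out \ : \ Z^I \ | \ I, P)$. Since $Z^I[P]$ is a function of $(Z^I, I, P)$, applying the chain rule and discarding a non-negative term yields
$$I(Out \ : \ Z^I \ | \ I, P) \ \ge \ I(Out \ : \ Z^I[P] \ | \ I, P).$$

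To finish, I would assume without loss of generality that the protocol always outputs positions that are valid bit-positions of $Z^I$ (an invalid output is automatically wrong, so replacing it with a fixed default valid tuple cannot decrease the success probability). Under this convention, conditional on $(I, P)$ the bits $Z^I[P]$ form a uniformly random string of length $c$, giving $H(Z^I[P] \ | \ I, P) = c$. The predictor $V$ is a function of $Out$ and, by the $\epsilon$-error assumption, $\Pr[V = Z^I[P]] \ge 1 - \epsilon$, so Fano's inequality yields
$$H(Z^I[P] \ | \ Out) \ \le \ H(\epsilon) + \epsilon \log(2^c - 1) \ \le \ 1 + \epsilon c.$$
Since $(I, P)$ is a function of $Out$, $H(Z^I[P] \ | \ I, P, Out) = H(Z^I[P] \ | \ Out)$, and combining the two displays above gives $I(Out \ : \ Z^I[P] \ | \ I, P) \ge c - 1 - \epsilon c = (1-\epsilon) c - 1 \ge (1-\epsilon) 1.01 k - 1$, as required.

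The main subtlety is handling outputs whose claimed positions fall outside the actual length of $Z^I$; the WLOG reduction above resolves this cleanly, but an alternative is to set $Z^I[P] = \bot$ on this event and absorb it into the failure probability, which yields the same bound via essentially the same Fano step.
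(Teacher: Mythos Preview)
Your argument has a genuine gap at the step where you assert $H(Z^I[P] \mid I, P) = c$. The pair $(I,P)$ is a deterministic function of $W = (M_{p-1}, X_p, Y_p)$, and $W$ is in general correlated with the bits of $Z^I$: the block $Y_p^I$ is literally part of $W$, and $M_{p-1}$ may encode information about earlier blocks $Y_j^I$. Hence conditioning on the protocol's adaptive choice of $(I, P)$ can bias $Z^I[P]$ away from uniform. Concretely, if $M_{p-1}$ happens to reveal the value of $Y_1^1[1]$, a protocol could include position $1$ in $P$ exactly when that bit equals $0$; for such a protocol $H(Z^I[P] \mid I, P) < c$. Your ``valid positions'' WLOG only guarantees that the positions index actual bits of $Z^I$; it says nothing about uniformity after conditioning on an adaptively chosen $(I,P)$.

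The loss actually occurs already in your second step: passing from $I(Out : Z^I)$ to $I(Out : Z^I \mid I,P)$ discards the non-negative term $I\bigl((I,P) : Z^I\bigr)$, and that discarded term is precisely what would have compensated for the shortfall in $H(Z^I[P] \mid I,P)$. The paper's proof avoids this trap by never conditioning on $(I,P)$. It works directly with $I(Out : Z^I) = H(Z^I) - H(Z^I \mid Out)$, introduces the error indicator $E$, and argues that on the event $E=0$ the output pins down at least $1.01k$ bits of $Z^I$, giving $H(Z^I \mid Out, E=0) \le H(Z^I) - 1.01k$, while the $E=1$ branch contributes at most $\epsilon\, H(Z^I)$. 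The $H(Z^I)$ terms cancel and $(1-\epsilon)1.01k - 1$ drops out. The crucial difference is that the paper only needs an \emph{upper} bound on a conditional entropy (which revealing bits can only help), whereas your route requires a \emph{lower} bound on $H(Z^I[P] \mid I,P)$, which conditioning on an adaptively chosen $(I,P)$ can destroy.
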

\begin{proof}
We will first bound the term $I(Out \ : \ Z^I) = H(Z^I) - H(Z^I \ | \ Out)$. To this end, let 
$E$ be the indicator variable of the event that the protocol errs. Then, $\Pr [E=1] \le \epsilon$. We have:
 \begin{eqnarray}
  \nonumber H(E, Z^I \ | \ Out) & = & H(Z^I \ | \ Out) + H(E \ | \ Out, Z^I ) \\
  & = & H(Z^I \ | \ Out) \ , \label{eqn:209}
 \end{eqnarray}
 where we used the chain rule for entropy and the observation that $E$ is fully determined
 by $Out$ and $Z^I$ which implies $H(E \ | \ Out, Z^I ) = 0$. Furthermore,
 \begin{eqnarray} 
  \nonumber H(E, Z^I \ | \ Out) & = & H(E \ | \ Out) + H(Z^I  \ | \ E, Out) \\
  & \le & 1 +  H(Z^I \ | \ E, Out) \ , \label{eqn:210}
 \end{eqnarray}
using the chain rule for entropy and the bound $H(E \ | \ Out) \le H(E) \le 1$ (conditioning reduces entropy). 
From Inequalities~\ref{eqn:209} and \ref{eqn:210}
we obtain:
\begin{eqnarray} \label{eqn:230}
 H(Z^I \ | \ Out) & \le & 1 +  H(Z^I \ | \ E, Out) \ .
\end{eqnarray}
Next, we bound the term $H(Z^I \ | \ E, Out)$ as follows:
\begin{eqnarray} 
 \nonumber H(Z^I \ | \ E, Out) & = &  \Pr \left[ E = 0 \right] H(Z^I \ | \ Out, E = 0) \\
 & & + \  \Pr \left[ E = 1 \right] H(Z^I \ | \ Out, E = 1)  \ . \label{eqn:211}
\end{eqnarray}
Concerning the term $H(Z^I \ | \ Out, E = 0)$, 
since no error occurs, $Out$ already determines at least $1.01k$ bits of $Z^I$.
We thus have that $H(Z^I \ | \ Out, E = 0) \le H(Z^I) - 1.01 k$.
We bound the term $H(Z^I \ | \ Out, E = 1)$ by $H(Z^I \ | \ Out, E = 1) \le H(Z^I)$ (since conditioning can only decrease entropy).
The quantity $H(Z^I \ | \ E, Out)$ can thus be bounded as follows:
\begin{eqnarray} 
 \nonumber H(Z^I \ | \ E, Out) & \le & (1-\epsilon) (H(Z^I) - 1.01k) + \epsilon H(Z^I) \\
 & = & H(Z^I) - (1-\epsilon) 1.01k \ . \label{eqn:212}
\end{eqnarray}
Next, using Inequalities~\ref{eqn:230} and \ref{eqn:212}, we thus obtain:
\begin{eqnarray*}
 I(Out \ : \ Z^I) & = & H(Z^I) - H(Z^I \ | \ Out) \\
 & \ge & H(Z^I) - 1 - H(Z^I \ | \ E, Out) \\
 & \ge & H(Z^I) - 1 - ( H(Z^I) - (1-\epsilon) 1.01k ) \\
 & = & (1-\epsilon) 1.01k - 1\ .
\end{eqnarray*}
Last, observe that $Out$ is a function of $M_{p-1}, X_p$, and $Y_p$. The result then follows from the data processing inequality.
\end{proof}


Next, since the set $X_i$ is a uniform random subset of $X_{i-1}$, we will argue in Lemma~\ref{lem:bit-vector-learning-key-lemma} 
that the message $M_{i-1}$ can only contain a limited amount of information about the bits $Y_{i-1}^{X_i}$. 
This will be stated as a suitable conditional mutual information expression that will be used later. The proof of 
Lemma~\ref{lem:bit-vector-learning-key-lemma} relies on Baranyai's theorem \cite{b79}, which in its original form states
that every complete regular hypergraph is $1$-factorisable, i.e., the set of hyperedges can be partitioned into 
$1$-factors. We restate this theorem as Theorem~\ref{thm:baranyai} in a form that is more suitable for our purposes.

\begin{theorem}[Baranyai's theorem \cite{b79} - rephrased] \label{thm:baranyai}
 Let $k,n$ be integers so that $k \mid n$. Let $S \subseteq 2^{[n]}$ be the set consisting of all subsets of $[n]$ of cardinality $k$. 
 Then there exists a partition of $S$ into $|S| \frac{k}{n}$ subsets $S_1, S_2, \dots, S_{|S| \frac{k}{n}}$ such that:
 \begin{enumerate}
  \item $|S_i| = \frac{n}{k}$, for every $i$,
  \item $S_i \cap S_j = \varnothing$, for every $i \neq j$, and
  \item $\bigcup_{x \in S_i} x = [n]$, for every $i$.
 \end{enumerate}
\end{theorem}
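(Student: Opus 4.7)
The plan is to prove this by Baranyai's classical matrix-rounding approach. Let $r = |S| \cdot k/n = \binom{n-1}{k-1}$ denote the target number of classes. Condition (3) is essentially the entire content of the theorem: if each $S_i$ is a set-partition of $[n]$, then $|S_i| = n/k$ is automatic (giving (1)), and disjointness of distinct $S_i$ as subsets of $S$ (condition (2)) is built into the fact that we are producing a partition of $S$. A counting check, $r \cdot (n/k) = \binom{n}{k}$, confirms the totals match, so the only real task is to produce $r$ parallel classes, each of which partitions $[n]$.

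First, I would exhibit a fractional solution. Define $X \in \mathbb{R}^{r \times \binom{n}{k}}$ by $X_{i,T} = 1/r$. Every column sums to $1$, and for each row $i$ and each $j \in [n]$ one has $\sum_{T \ni j} X_{i,T} = \binom{n-1}{k-1}/r = 1$. Hence $X$ is a legitimate fractional $1$-factorisation of the complete $k$-uniform hypergraph on $[n]$.

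The core step is then to invoke Baranyai's integer-rounding lemma: any real matrix whose row sums and column sums are integers admits an entrywise rounding to an integer matrix with the same row and column sums, with each entry altered by strictly less than $1$. I would apply it to the lifted matrix $\tilde{X}$ indexed by classes $i$ and by element–set incidences $(j, T)$ with $j \in T$, setting $\tilde{X}_{i,(j,T)} = 1/r$. In this encoding, the partition-of-$[n]$ requirement for class $i$ becomes the per-element row-sum condition $\sum_{T \ni j} \tilde{X}_{i,(j,T)} = 1$ for every $j$, while the ``each $T$ appears in exactly one class'' requirement is a column-sum condition. Rounding and then projecting back to a $\{0,1\}$ matrix on $r \times \binom{n}{k}$ extracts the desired classes $S_1, \dots, S_r$, and rounding with error strictly less than $1$ forces each fractional $1/r$ to a $0$ or $1$ while preserving all integer totals.

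The main obstacle is the rounding lemma itself; rather than reprove it I would cite its standard derivation via integrality of transportation polytopes (equivalently, total unimodularity of bipartite incidence matrices). A subtler point worth stressing is that the divisibility hypothesis $k \mid n$ enters precisely here: it is exactly what makes the row sums $n/k$ (and all related partial sums in $\tilde{X}$) integral, without which the rounding lemma would not apply, and indeed without which the theorem itself fails.
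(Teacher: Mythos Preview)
The paper does not prove this theorem; it is stated with a citation to Baranyai's original 1979 paper and then used as a black box in the proof of Lemma~\ref{lem:bit-vector-learning-key-lemma}. So there is no ``paper's proof'' to compare against.

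Regarding your sketch itself: it follows the spirit of Baranyai's argument (fractional solution plus integer rounding), but as written it has a gap. The rounding lemma you quote preserves only the total row sums and total column sums of a matrix. In your lifted matrix $\tilde{X}$, the constraint $\sum_{T \ni j} \tilde{X}_{i,(j,T)} = 1$ for each fixed $i$ and $j$ is not a full row sum of $\tilde{X}$ but a \emph{partial} row sum (summing only over those columns whose first coordinate is $j$), so a single application of the rounding lemma need not preserve it. Worse, after rounding $\tilde{X}$ you may obtain $\tilde{X}_{i,(j_1,T)} \ne \tilde{X}_{i,(j_2,T)}$ for two elements $j_1, j_2 \in T$, and then there is no consistent way to ``project back'' to a $\{0,1\}$ decision about whether $T$ belongs to class $i$. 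Baranyai's actual proof avoids both issues by an inductive construction: one processes the ground-set elements $1,2,\dots,n$ one at a time, maintaining at stage $t$ a collection of $r$ multisets of subsets of $\{1,\dots,t\}$ (with prescribed target sizes) that are destined to become the parallel classes; at each stage the rounding lemma is applied to a matrix whose rows are the classes and whose columns are the current partial sets, deciding for each where element $t{+}1$ is placed. The divisibility hypothesis $k \mid n$ is indeed what keeps the relevant totals integral throughout this induction, so that part of your observation is correct.
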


\begin{lemma} \label{lem:bit-vector-learning-key-lemma}
 Suppose that $n_i \mid n_{i-1}$. Then:
 $$I(M_{i-1} \ : \ Y_{i-1}^{X_i} \ | \ X_i) \le \frac{n_i}{n_{i-1}}|M_{i-1}|  \ .$$ 
\end{lemma}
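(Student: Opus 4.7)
The plan is to condition on $X_{i-1}$, reduce the expression to an average over size-$n_i$ subsets $S$ of $X_{i-1}$, and then exploit Baranyai's theorem together with independence of the bit-strings $Y_{i-1}^{j}$.

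First I would observe that party $i-1$ does not see $X_i$, so $M_{i-1}$ is a deterministic function of $X_1,\dots,X_{i-1},Y_1,\dots,Y_{i-1}$. In particular, conditioned on $X_{i-1}$, the random variable $X_i$ is independent of the pair $(M_{i-1},Y_{i-1})$. Using this I can rewrite
\[
I(M_{i-1} : Y_{i-1}^{X_i} \mid X_i) \;=\; \Exp_{T}\, \Exp_{S}\, I\!\left(M_{i-1} : Y_{i-1}^{S} \;\middle|\; X_{i-1}=T\right),
\]
where the outer expectation is over the value $T$ of $X_{i-1}$, and the inner one is over $S$ chosen uniformly from the size-$n_i$ subsets of $T$.

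Fixing now a single $T$ with $|T|=n_{i-1}$, I would invoke Theorem~\ref{thm:baranyai} with parameters $n_{i-1}$ and $n_i$ (which is legal precisely because $n_i \mid n_{i-1}$) to obtain a partition of the $\binom{n_{i-1}}{n_i}$ size-$n_i$ subsets of $T$ into $t = \binom{n_{i-1}}{n_i}\cdot n_i/n_{i-1}$ parallel classes $\mathcal{P}_1,\dots,\mathcal{P}_t$, each of which is itself a partition of $T$ into $n_{i-1}/n_i$ blocks of size $n_i$. For a single class $\mathcal{P}_j=\{S_1,\dots,S_{n_{i-1}/n_i}\}$, the strings $Y_{i-1}^{S_1},\dots,Y_{i-1}^{S_{n_{i-1}/n_i}}$ are mutually independent given $X_{i-1}=T$, since the $Y_{i-1}^{j}$'s are independent and the $S_\ell$'s are disjoint. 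The standard subadditivity inequality $\sum_\ell I(A:B_\ell) \le I(A:B_1,\dots,B_m)$ for independent $B_\ell$ then yields
\[
\sum_{\ell} I\!\left(M_{i-1} : Y_{i-1}^{S_\ell} \;\middle|\; X_{i-1}=T\right) \;\le\; I\!\left(M_{i-1} : Y_{i-1}^{T} \;\middle|\; X_{i-1}=T\right) \;\le\; |M_{i-1}|.
\]

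Summing this bound over all $t$ parallel classes and dividing by the total number $\binom{n_{i-1}}{n_i}$ of subsets gives
\[
\Exp_{S}\, I\!\left(M_{i-1} : Y_{i-1}^{S} \;\middle|\; X_{i-1}=T\right) \;\le\; \frac{t\cdot |M_{i-1}|}{\binom{n_{i-1}}{n_i}} \;=\; \frac{n_i}{n_{i-1}}\,|M_{i-1}|,
\]
and taking a final expectation over $T$ completes the proof. I expect the main subtlety to lie not in the combinatorial or information-theoretic step (which is a clean use of Baranyai plus independence), but in cleanly justifying the very first move: namely that conditioning on the event $X_i=S$ amounts to conditioning on $X_{i-1}\supseteq S$ together with a uniformly random choice among size-$n_i$ subsets, so that the relevant mutual information is measured against the already-committed $M_{i-1}$ with $X_i$ playing no further role.
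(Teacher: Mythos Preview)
Your approach is essentially the paper's: add $X_{i-1}$ to the conditioning, expand as an expectation over $(T,S)$, apply Baranyai to partition the size-$n_i$ subsets of $T$ into parallel classes, and use independence of the $Y_{i-1}^{S_\ell}$ within a class (via the chain rule plus Lemma~\ref{lem:indep-variables}) to collapse each class to a single $I(M_{i-1}:Y_{i-1}\mid X_{i-1}=T)\le |M_{i-1}|$. One small correction: your displayed identity should be an inequality, not an equality. The Markov relation $X_i - X_{i-1} - (M_{i-1},Y_{i-1})$ that you invoke gives $I(M_{i-1}:Y_{i-1}^{X_i}\mid X_i)\le I(M_{i-1}:Y_{i-1}^{X_i}\mid X_i,X_{i-1})$ (this is exactly the paper's first step via Lemma~\ref{lem:indep-variables}), and it is the right-hand side that equals $\Exp_T\Exp_S I(M_{i-1}:Y_{i-1}^S\mid X_{i-1}=T)$; equality need not hold on the left because $X_{i-1}$ and $Y_{i-1}^S$ can become correlated once you condition on $M_{i-1}$. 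This does not affect the argument, and you correctly anticipated that this first move is where the care is needed.
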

\begin{proof}
First, using Lemma~\ref{lem:indep-variables}, we obtain $I(M_{i-1} \ : \ Y_{i-1}^{X_i} \ | \ X_i) \le I(M_{i-1} \ : \ Y_{i-1}^{X_i} \ | \ X_i X_{i-1})$ (observe that $Y_{i-1}^{X_i}$ and $X_{i-1}$ are independent conditioned on $X_{i}$).
Then, using the definition of conditional mutual information, we rewrite as follows:
\begin{align}
 \nonumber I(M_{i-1}  & : Y_{i-1}^{X_i} \ | \ X_i X_{i-1}) \\
 = &  \Exp_{x_{i-1} \gets X_{i-1}} \Exp_{x_i \gets X_i} I(M_{i-1} \ : \ Y_{i-1}^{X_i} \ | \ X_i = x_i, X_{i-1} = x_{i-1}) \nonumber \\
 = & \Exp_{x_{i-1} \gets X_{i-1}} \Exp_{x_i \gets X_i} I(M_{i-1} \ : \ Y_{i-1}^{x_i} \ | \ X_{i-1} = x_{i-1}) \ . \label{eqn:958}
\end{align}
 Let $\mathcal{X}(x_{i-1})$ be the set of all subsets of $x_{i-1}$ of size $n_i$. 
 Observe that the distribution of $X_i$ is uniform among the elements $\mathcal{X}(x_{i-1})$.
 Next, since $n_i \mid n_{i-1}$, by Baranyai's theorem \cite{b79} as stated in Theorem~\ref{thm:baranyai}, the set $\mathcal{X}(x_{i-1})$ can be partitioned into 
 $|\mathcal{X}(x_{i-1})|\frac{n_{i}}{n_{i-1}}$ subsets $\mathcal{X}_1(x_{i-1}), \mathcal{X}_2(x_{i-1}), \dots$
 such that $\cup_{x \in \mathcal{X}_{j}(x_{i-1})} x = x_{i-1}$. Denote the elements of set $\mathcal{X}_j(x_{i-1})$ by 
 $x_j^1, x_j^2, \dots, x_j^{\frac{n_{i-1}}{n_i}}$. We thus have:
\begin{align}
\nonumber &  \Exp_{x_i \gets X_i} I(M_{i-1} \ : \ Y_{i-1}^{x_i} \ | \ X_{i-1} = x_{i-1}) \\ 
\nonumber & = \frac{1}{|\mathcal{X}(x_{i-1})|} \sum_{x_i \in \mathcal{X}(x_{i-1})} I(M_{i-1} \ : \ Y_{i-1}^{x_i} \ | \ X_{i-1} = x_{i-1}) \\
\nonumber & = \frac{1}{|\mathcal{X}(x_{i-1})|} \sum_{j \in [|\mathcal{X}(x_{i-1})| \frac{ n_{i}}{n_{i-1}}]} \sum_{\ell \in [\frac{n_{i-1}}{n_i}]} I(M_{i-1} \ : \ Y_{i-1}^{x_j^{\ell}} \ | \ X_{i-1} = x_{i-1}) \\
\nonumber & \le \frac{1}{|\mathcal{X}(x_{i-1})|} \sum_{j \in [|\mathcal{X}(x_{i-1})|\frac{n_{i}}{n_{i-1}}]} \sum_{\ell \in [\frac{n_{i-1}}{n_i}]} I(M_{i-1} \ : \ Y_{i-1}^{x_j^{\ell}} \ | \ Y_{i-1}^{x_j^{1}} \dots \\
& \quad \quad \quad \quad \quad \quad \quad \quad \quad \quad \quad \quad \quad \quad \quad \quad \quad  \dots Y_{i-1}^{x_j^{\ell - 1}}, X_{i-1} = x_{i-1}) \nonumber \\
\nonumber & = \frac{1}{|\mathcal{X}(x_{i-1})|} \sum_{j \in [|\mathcal{X}(x_{i-1})| \frac{n_{i}}{n_{i-1}}]} I(M_{i-1} \ : \ Y_{i-1} \ | \  X_{i-1} = x_{i-1}) \\
 & = \frac{n_{i}}{n_{i-1}} I(M_{i-1} \ : \ Y_{i-1} \ | \  X_{i-1} = x_{i-1})  \ , \label{eqn:376} 
\end{align}
where we used Lemma~\ref{lem:indep-variables} to obtain the first inequality and the chain rule for mutual information for 
the subsequent equality.
Combining Inequalities~\ref{eqn:958} and \ref{eqn:376}, we obtain:
\begin{align*}
 & I(M_{i-1} \ : \ Y_{i-1}^{X_i} \ | \ X_i X_{i-1}) \\
 & \le  \Exp_{x_{i-1} \gets X_{i-1}} \frac{n_{i}}{n_{i-1}} I(M_{i-1} \ : \ Y_{i-1} \ | \  X_{i-1} = x_{i-1}) \\
 & = \frac{n_{i}}{n_{i-1}} I(M_{i-1} \ : \ Y_{i-1} \ | \  X_{i-1}) \le \frac{n_{i}}{n_{i-1}} H(M_{i-1}) \le \frac{n_{i}}{n_{i-1}} |M_{i-1}| \ .
\end{align*}

\end{proof}

The next lemma shows that the last party's knowledge about the crucial bits $Y_1^{X_2}, Y_2^{X_3}, \dots, Y_{p-1}^{X_p}$ is
limited.

\begin{lemma}\label{lem:lb-bit-vector-learning}
The following holds: (recall that $s = \max_{i} |M_i|$)
 $$I(M_{p-1} X_p Y_p \ : \ Y_1^{X_2} Y_2^{X_3} \dots Y_{p-1}^{X_p}) \le \frac{s(p-1)}{ n^{\frac{1}{p-1}}} \ .$$
\end{lemma}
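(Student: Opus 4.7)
My plan is to decompose the left-hand side by the chain rule for mutual information, reduce each of the resulting $p-1$ terms to a quantity directly bounded by (the same argument as) Lemma~\ref{lem:bit-vector-learning-key-lemma}, and sum. Writing $Y_{<j} := Y_1^{X_2} Y_2^{X_3} \ldots Y_{j-1}^{X_j}$, the chain rule gives
\begin{equation*}
I(M_{p-1} X_p Y_p : Y_1^{X_2} \ldots Y_{p-1}^{X_p}) \;=\; \sum_{j=1}^{p-1} I(M_{p-1} X_p Y_p : Y_j^{X_{j+1}} \mid Y_{<j}),
\end{equation*}
so it suffices to show each summand is at most $s / n^{1/(p-1)}$.

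Fix $j$. I would first use that $(M_{p-1}, X_p, Y_p)$ is a deterministic function of $(M_j, X_{j+1:p}, Y_{j+1:p})$, together with the data processing inequality, to pass to $I(M_j X_{j+1:p} Y_{j+1:p} : Y_j^{X_{j+1}} \mid Y_{<j})$. Applying the chain rule to peel off $(X_{j+2:p}, Y_{j+1:p})$, and observing that these are generated from fresh randomness independent of $Y_j$ given $X_{j+1}$ (and hence independent of $Y_j^{X_{j+1}}$), the contribution of this block vanishes and what remains is $I(M_j X_{j+1} : Y_j^{X_{j+1}} \mid Y_{<j})$. A further chain rule, together with the fact that $Y_j^{X_{j+1}}$ is an ordered tuple of $n_{j+1}$ uniform independent $k$-bit strings and is therefore jointly independent of the pair $(X_{j+1}, Y_{<j})$, shows the $X_{j+1}$ piece of the left argument contributes zero, leaving $I(M_j : Y_j^{X_{j+1}} \mid Y_{<j}, X_{j+1})$.

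The final step is the bound $I(M_j : Y_j^{X_{j+1}} \mid Y_{<j}, X_{j+1}) \le \tfrac{n_{j+1}}{n_j} |M_j|$, which mirrors Lemma~\ref{lem:bit-vector-learning-key-lemma} with the extra conditioning on $Y_{<j}$ carried along benignly. I would first add $X_j$ to the conditioning via Lemma~\ref{lem:indep-variables}, valid because $Y_j^{X_{j+1}}$ is independent of $X_j$ given $X_{j+1}$ (the bits $Y_j^i$ for $i \in X_{j+1} \subseteq X_j$ are uniform irrespective of the rest of $X_j$), and similarly independent of $Y_{<j}$. Then, for each fixing of $(Y_{<j}, X_j)$, the event $X_{j+1} = x_{j+1}$ is independent of $(M_j, Y_j)$, and the proof of Lemma~\ref{lem:bit-vector-learning-key-lemma} applies verbatim: Baranyai's theorem (Theorem~\ref{thm:baranyai}) partitions the $n_{j+1}$-subsets of $X_j$ into matchings, and averaging over the uniform choice $X_{j+1} \subseteq X_j$ bounds the expectation by $\tfrac{n_{j+1}}{n_j} I(M_j : Y_j \mid Y_{<j}, X_j) \le \tfrac{n_{j+1}}{n_j} H(M_j) \le \tfrac{n_{j+1}}{n_j} |M_j|$. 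Since $n_{j+1}/n_j = n^{-1/(p-1)}$ and $|M_j| \le s$, summing over $j$ yields the claimed bound.

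The main obstacle is careful bookkeeping of conditional independences, since Lemma~\ref{lem:indep-variables} is one-directional: adding an independent conditioning variable can only \emph{increase} conditional mutual information. This forces every reduction to be either (i) an application of data processing, (ii) the removal of a variable whose conditional mutual-information contribution is exactly zero by conditional independence, or (iii) an addition of conditioning that one then bounds from above. A related subtlety is the reading of $Y_i^S$ as the ordered tuple $(Y_i^{s_1}, \ldots, Y_i^{s_{|S|}})$ in $(\{0,1\}^k)^{|S|}$, \emph{not} as a dictionary recording the indices $s_1, \ldots, s_{|S|}$: this is what makes $Y_j^{X_{j+1}}$ uniformly distributed over its range and independent of both $X_{j+1}$ and of the other $Y_i^{X_{i+1}}$'s, which in turn is what enables the ``vanishing'' arguments in the second paragraph.
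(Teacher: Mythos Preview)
Your proof is correct and uses the same two ingredients as the paper --- data processing to descend from $M_{p-1}$ to $M_j$, and the Baranyai averaging of Lemma~\ref{lem:bit-vector-learning-key-lemma} --- but organizes them differently. The paper does not split via the chain rule into $p-1$ parallel summands. Instead it first reduces the left-hand side to $I(M_{p-1} : Y_1^{X_2}\ldots Y_{p-1}^{X_p}\mid X_p)$ and then establishes the recursion
\[
I(M_{i-1} : Y_1^{X_2}\ldots Y_{i-1}^{X_i}\mid X_i)\ \le\ \frac{n_i}{n_{i-1}}\,|M_{i-1}|\ +\ I(M_{i-2} : Y_1^{X_2}\ldots Y_{i-2}^{X_{i-1}}\mid X_{i-1}),
\]
peeling off one $Y_{i-1}^{X_i}$ and one message index at a time. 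The advantage of the paper's recursion is that each invocation of Lemma~\ref{lem:bit-vector-learning-key-lemma} is in exactly the stated form, with no extra conditioning; your decomposition forces you to re-run the Baranyai argument with the benign side-conditioning on $Y_{<j}$, which you handle correctly but which is extra work. Conversely, your route is arguably more transparent: the $p-1$ summands are visible from the outset, and each is bounded independently, so there is no inductive bookkeeping. Your explicit remark on the tuple (rather than dictionary) reading of $Y_i^S$ is exactly the point that makes the vanishing steps go through, and it is the same convention the paper relies on implicitly when it asserts $I(X_iY_i : Y_1^{X_2}\ldots Y_{i-1}^{X_i})=0$.
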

\begin{proof}
Let $3 \le i \le p$ be an integer. Then:
 \begin{align}
  \nonumber & I(M_{i-1} X_i Y_i \ : \ Y_1^{X_2} \dots Y_{i-1}^{X_i}) \\
  & = I(X_i Y_i \ : \ Y_1^{X_2} \dots Y_{i-1}^{X_i}) + 
  I(M_{i-1}  \ : \ Y_1^{X_2} \dots Y_{i-1}^{X_i} \ | \ X_i Y_i) \nonumber \\
  & = 0 + I(M_{i-1}  \ : \ Y_1^{X_2} \dots Y_{i-1}^{X_i} \ | \ X_i) \ , \label{eqn:918}
 \end{align}
 where we first applied the chain rule, then used that the respective random variables are independent, and finally eliminated
 the conditioning on $Y_i$, which can be done since all other variables are independent with $Y_i$ (see Rule $5$ in Section~\ref{sec:prelim}).
Next, we apply the chain rule again, invoke Lemma~\ref{lem:bit-vector-learning-key-lemma}, and remove variables 
from the conditioning as they are independent with all other variables:
 \begin{align*}
  & I(M_{i-1}  \ : \ Y_1^{X_2} \dots Y_{i-1}^{X_i} \ | \ X_i) \\
  & = I(M_{i-1}  \ : \ Y_{i-1}^{X_i} \ | \ X_i) + I(M_{i-1}  \ : \ Y_1^{X_2} \dots Y_{i-2}^{X_{i-1}} \ | \ X_i Y_{i-1}^{X_i}) \\
  & \le |M_{i-1}| \frac{n_p}{n_{p-1}} + I(M_{i-1}  \ : \ Y_1^{X_2} \dots Y_{i-2}^{X_{i-1}} \ | \ Y_{i-1}^{X_i}) \ .
 \end{align*}  
Next, we bound the term $I(M_{i-1}  \ : \ Y_1^{X_2} \dots Y_{i-2}^{X_{i-1}} \ | \ Y_{i-1}^{X_i})$ by using the data processing inequality,
the chain rule, and remove an independent variable from the conditioning:
\begin{align*}
 & I(M_{i-1}  \ : \ Y_1^{X_2} \dots Y_{i-2}^{X_{i-1}} \ | \ Y_{i-1}^{X_i}) \\
 & \quad \quad \le I(M_{i-2} X_{i-1} Y_{i-1}  \ : \ Y_1^{X_2} \dots Y_{i-2}^{X_{i-1}} \ | \ Y_{i-1}^{X_i}) \\
 & \quad \quad = I(X_{i-1} Y_{i-1} \ : \ Y_1^{X_2} \dots Y_{i-2}^{X_{i-1}} \ | \ Y_{i-1}^{X_i}) \\
 & \quad \quad \quad \quad + \ I(M_{i-2}  \ : \ Y_1^{X_2} \dots Y_{i-2}^{X_{i-1}} \ | \ X_{i-1} Y_{i-1} Y_{i-1}^{X_i}) \\
 & \quad \quad  = 0 + I(M_{i-2}  \ : \ Y_1^{X_2} \dots Y_{i-2}^{X_{i-1}} \ | \ X_{i-1}) \ .
\end{align*}
We have thus shown:
 \begin{align} \nonumber
& I(M_{i-1}  \ : \ Y_1^{X_2} \dots Y_{i-1}^{X_i} \ | \ X_i) \\
\label{eqn:709}& \le |M_{i-1}| \frac{n_i}{n_{i-1}} + I(M_{i-2}  \ : \ Y_1^{X_2} \dots Y_{i-2}^{X_{i-1}} \ | \ X_{i-1}) \ .
\end{align} 
Using a simpler version of the same reasoning, we can show that:
 \begin{eqnarray} \label{eqn:710}
I(M_1  \ : \ Y_1^{X_2}  \ | \ X_2) \le |M_1| \frac{n_2}{n_1}  \ .  
 \end{eqnarray}
Using Equality~\ref{eqn:918} and Inequalities~\ref{eqn:709} and \ref{eqn:710}, we obtain:
 \begin{align*}
  & I(M_{p-1} \ : \ Y_1^{X_2} \dots Y_{p-1}^{X_p} X_p Y_p ) \\
  & \quad \quad \le s \left( \frac{n_p}{n_{p-1}} + \frac{n_{p-1}}{n_{p-2}} + \dots + \frac{n_2}{n_1}\right)  = \frac{(p-1) s}{ n^{\frac{1}{p-1}}} \ .
 \end{align*}
\end{proof}

Finally we are ready to prove the main result of this section.

\begin{theorem} \label{thm:bit-vector-learning}
For every $\epsilon < 0.005$, the randomized one-way communication complexity of \textsf{Bit-Vector Learning}$(p,n,k)$ is bounded as follows:
 \begin{align*}
R_{\epsilon}^{\rightarrow}(\textsf{Bit-Vector Learning}(p, n, k)) & \ge \frac{(0.005 k - 1)n^{\frac{1}{p-1}}}{p-1} \\
 & = \Omega( \frac{k n^{\frac{1}{p-1}}}{p}) \ .
 \end{align*}
 
\end{theorem}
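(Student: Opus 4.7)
The strategy is to combine Lemma~\ref{lem:fano-type}, which gives a lower bound on how much the last party's view $(M_{p-1}, X_p, Y_p)$ must know about $Z^I$, with Lemma~\ref{lem:lb-bit-vector-learning}, which gives an upper bound on how much it can know about $Y_1^{X_2} \cdots Y_{p-1}^{X_p}$. The bridge between them peels off the at most $k$ ``free'' bits contained in $Y_p^I$, which party $p$ can compute from its own input $(Y_p, I)$. Writing $Z^I = Y_p^I \circ Y_{\le p-1}^I$ with $Y_{\le p-1}^I := Y_1^I \circ \cdots \circ Y_{p-1}^I$ and applying the chain rule together with $I(M_{p-1} X_p Y_p \ : \ Y_p^I) \le H(Y_p^I) \le k+1$, I would obtain
\[
I(M_{p-1} X_p Y_p \ : \ Y_{\le p-1}^I \mid Y_p^I) \;\ge\; (1-\epsilon)\cdot 1.01 k - k - 2,
\]
which, for $\epsilon < 0.005$ and $k$ large enough, is at least $0.005 k - O(1)$.

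Next, the crux is to pass from $Y_{\le p-1}^I$ to $Y_1^{X_2} \cdots Y_{p-1}^{X_p}$, the quantity Lemma~\ref{lem:lb-bit-vector-learning} actually bounds. The key observation is that since $|X_p| = n_p = 1$, the event $I \in X_p$ is just the event $I = X_p$, and on this event $I \in X_{j+1}$ for every $j \le p-1$, so each $Y_j^I$ is a deterministic function of $(I, X_{j+1}, Y_j^{X_{j+1}})$. I would argue without loss of generality that the protocol always outputs $I = X_p$: if instead $I \ne X_p$, then $Y_p^I = \epsilon$, party $p$ receives no free bits, and the same chain-rule decomposition applied directly to $Z^I$ yields an even stronger lower bound. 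Under this reduction, together with a careful accounting of the pointer entropy (which is only $O(\log n)$ once $X_p$ is known, thanks to the nested-subset structure of the $X_i$'s), the standard identity $I(A \ : \ C) \le I(A \ : \ B) + H(C \mid B)$ applied with $B = Y_1^{X_2} \cdots Y_{p-1}^{X_p}$ and $C = Y_{\le p-1}^I$ gives
\[
I(M_{p-1} X_p Y_p \ : \ Y_{\le p-1}^I) \;\le\; I(M_{p-1} X_p Y_p \ : \ Y_1^{X_2} \cdots Y_{p-1}^{X_p}) + O(\log n).
\]

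Combining the two chains with Lemma~\ref{lem:lb-bit-vector-learning} gives $\frac{(p-1)s}{n^{1/(p-1)}} + O(\log n) \ge 0.005 k - O(1)$, which rearranges to $s \ge \Omega(k\cdot n^{1/(p-1)} / p)$, matching the claimed bound. The main technical obstacle will be the second step: the sets $X_2, \ldots, X_{p-1}$ have large marginal entropy, so a naive absorption of their information cost would destroy the bound, and the proof must exploit both the ``$I = X_p$'' WLOG reduction (to remove the need to identify $I$ inside each $X_{j+1}$) and the nested-subset structure of the $X_i$'s (to keep the residual pointer entropy logarithmic in $n$).
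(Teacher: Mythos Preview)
Your plan is essentially the paper's: combine Lemma~\ref{lem:fano-type} and Lemma~\ref{lem:lb-bit-vector-learning} via the chain rule, peeling off one $k$-bit block of $Z^I$ and bounding the rest by the quantity in Lemma~\ref{lem:lb-bit-vector-learning}. The paper does this more cleanly by letting $q$ be the largest index with $Y_q^I \neq \epsilon$ and peeling off $Y_q^I$ rather than $Y_p^I$. For every $j \le q-1$ one then has $I \in X_q \subseteq X_{j+1}$, so $Y_j^I$ is literally one of the coordinates of $Y_j^{X_{j+1}}$; hence $(Y_1^I,\dots,Y_{q-1}^I)$ is contained in $Y_1^{X_2}\cdots Y_{p-1}^{X_p}$, and the passage to Lemma~\ref{lem:lb-bit-vector-learning} goes through in one line, uniformly over all possible outputs $I$.

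Your ``WLOG $I = X_p$'' step is where the plan wobbles. When $I \ne X_p$ you assert the bound becomes ``even stronger'' because $Y_p^I=\epsilon$ and no free bits are peeled; but with $q<p$ the block $Y_q^I$ is \emph{not} among the coordinates of $Y_q^{X_{q+1}}$ (since $I\notin X_{q+1}$), so a $k$-bit peel is still required and the bound is the same, not stronger. Your downstream worries are also overblown: the sets $X_2,\dots,X_p$ are already determined by $B=Y_1^{X_2}\cdots Y_{p-1}^{X_p}$ (they are its index sets), so no ``large marginal entropy'' of the $X_i$'s ever enters, and the only residual pointer cost in $H(C\mid B)$ is at most $H(I)\le\log n$, which is harmless for the asymptotic bound and which the paper simply suppresses. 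Replacing your case split by the single definition of $q$ eliminates all of these complications at once.
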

\begin{proof}
 Let $q$ be the largest integer $i$ such that $Y_i^I \neq \epsilon$. Recall that by Lemma~\ref{lem:fano-type} 
 we have $I(M_{p-1} X_p Y_p \ : \  Z^I) \ge (1-\epsilon) 1.01k - 1$. However, we also obtain: 
 \begin{align*}
 I(M_{p-1} X_p Y_p \ : \  Z^I) & = I(M_{p-1} X_p Y_p \ : \  Y_1^I Y_2^I \dots Y_q^I) \\
  & = I(M_{p-1} X_p Y_p \ : \ Y_1^I Y_2^I \dots Y_{q - 1}^I) \\
  & \quad \quad + I(M_{p-1} X_p Y_p \ : \ Y_{q}^I \ | \ Y_1^I Y_2^I \dots Y_{q - 1}^I) \\
  & \le I(M_{p-1} X_p Y_p \ : \ Y_1^I Y_2^I \dots Y_{q - 1}^I) + H(Y_{q}^I) \\
  & \le I(M_{p-1} X_p Y_p \ : \ Y_1^{X_2} Y_2^{X_3} \dots Y_{p-1}^{X_p}) + k \\
  & \le \frac{(p-1) s}{ n^{\frac{1}{p-1}}} + k \ ,
\end{align*}
where we first applied the chain rule for mutual information, then observed that the variables
$Y_1^I Y_2^I \dots Y_{q - 1}^I$ are contained in the variables $Y_1^{X^2} Y_2^{X^3} \dots Y_{p-1}^{X_p}$, and then 
invoked Lemma~\ref{lem:lb-bit-vector-learning}. This is thus only possible if:
\begin{eqnarray*}
 (1-\epsilon) 1.01k - 1 \le \frac{(p-1) s}{ n^{\frac{1}{p-1}}} + k \ , 
\end{eqnarray*}
which, using $\epsilon < 0.005$, implies
\begin{eqnarray*}
 \frac{(0.005 k - 1)n^{\frac{1}{p-1}}}{p-1} \le s \ .
\end{eqnarray*}
Since we considered an arbitrary protocol $\Pi$, the result follows.
\end{proof}

\subsection{Reduction: \textsf{FEwW} to \textsf{Bit-Vector Learning}}
In this subsection, we show that a streaming algorithm for \textsf{FEwW} can be used to obtain a communication 
protocol for \textsf{Bit-Vector Learning}. The lower bound on the communication complexity of \textsf{Bit-Vector Learning} 
thus yields a lower bound on the space requirements of any algorithm for \textsf{FEwW}.

\begin{figure*}[t]
\begin{center}
  \begin{tikzpicture}
  \tikzstyle{vertex}=[circle,draw=black,fill=lightgray,minimum size=6pt,inner sep=0pt, outer sep=0pt]
  \tikzstyle{vertex2}=[circle,draw=black,fill=lightgray,minimum size=4pt,inner sep=0pt, outer sep=0pt]
  \tikzstyle{edge}=[draw=black]
  \tikzstyle{edge2}=[draw=lightgray]
 
  \node at (1.6, 0.9) {\textbf{Alice}};
  \node at (-0.8, -1) {$B_1$};
  \node [vertex] (v1) [label={\small $a_1$}] at (0,0) {};
  \node [vertex] (v2) [label={\small $a_2$}] at (1,0) {};  
  \node [vertex] (v3) [label={\small $a_3$}] at (2,0) {};  
  \node [vertex] (v4) [label={\small $a_4$}] at (3,0) {};   
  \node [vertex2] (x1) [label=below:{\small $1$}] at (-0.3,-1) {};
  \node [vertex2] (x2) [label=below:{\small $0$}] at (-0,-1) {};   
  \node [vertex2] (x3) [label=below:{\small $1$}] at (0.525,-1) {};
  \node [vertex2] (x4) [label=below:{\small $0$}] at (0.825,-1) {}; 
  \node [vertex2] (x5) [label=below:{\small $1$}] at (1.35,-1) {};
  \node [vertex2] (x6) [label=below:{\small $0$}] at (1.65,-1) {}; 
  \node [vertex2] (x7) [label=below:{\small $1$}] at (2.175,-1) {};
  \node [vertex2] (x8) [label=below:{\small $0$}] at (2.475,-1) {}; 
  \node [vertex2] (x9) [label=below:{\small $1$}] at (3,-1) {};
  \node [vertex2] (x10) [label=below:{\small $0$}] at (3.3,-1) {}; 

  \draw [edge2] (v1) -- (x1); \draw [edge2] (v1) -- (x4); \draw [edge2] (v1) -- (x6); \draw [edge2] (v1) -- (x7); \draw [edge2] (v1) -- (x10);  
  \draw [edge2] (v2) -- (x2); \draw [edge2] (v2) -- (x3); \draw [edge2] (v2) -- (x6); \draw [edge2] (v2) -- (x8); \draw [edge2] (v2) -- (x10);  
  \draw [edge2] (v3) -- (x2); \draw [edge2] (v3) -- (x3); \draw [edge2] (v3) -- (x6); \draw [edge2] (v3) -- (x7); \draw [edge2] (v3) -- (x9);
  \draw [edge] (v4) -- (x2); \draw [edge] (v4) -- (x3); \draw [edge] (v4) -- (x5); \draw [edge] (v4) -- (x7); \draw [edge] (v4) -- (x9);  
 \end{tikzpicture} \hspace{1cm}
 \begin{tikzpicture}
  \tikzstyle{vertex}=[circle,draw=black,fill=lightgray,minimum size=6pt,inner sep=0pt, outer sep=0pt]
  \tikzstyle{vertex2}=[circle,draw=black,fill=lightgray,minimum size=4pt,inner sep=0pt, outer sep=0pt]
  \tikzstyle{edge}=[draw=black]
  \tikzstyle{edge2}=[draw=lightgray]
 
  \node at (1.6, 0.9) {\textbf{Bob}};
  \node at (-0.8, -1) {$B_2$};
  \node [vertex] (v1) [label={\small $a_1$}] at (0,0) {};
  \node [vertex] (v2) [label={\small $a_2$}] at (1,0) {};  
  \node [vertex] (v3) [label={\small $a_3$}] at (2,0) {};  
  \node [vertex] (v4) [label={\small $a_4$}] at (3,0) {};   
  \node [vertex2] (x1) [label=below:{\small $1$}] at (-0.3,-1) {};
  \node [vertex2] (x2) [label=below:{\small $0$}] at (-0,-1) {};   
  \node [vertex2] (x3) [label=below:{\small $1$}] at (0.525,-1) {};
  \node [vertex2] (x4) [label=below:{\small $0$}] at (0.825,-1) {}; 
  \node [vertex2] (x5) [label=below:{\small $1$}] at (1.35,-1) {};
  \node [vertex2] (x6) [label=below:{\small $0$}] at (1.65,-1) {}; 
  \node [vertex2] (x7) [label=below:{\small $1$}] at (2.175,-1) {};
  \node [vertex2] (x8) [label=below:{\small $0$}] at (2.475,-1) {}; 
  \node [vertex2] (x9) [label=below:{\small $1$}] at (3,-1) {};
  \node [vertex2] (x10) [label=below:{\small $0$}] at (3.3,-1) {}; 

  \draw [edge2] (v1) -- (x1); \draw [edge2] (v1) -- (x3); \draw [edge2] (v1) -- (x6); \draw [edge2] (v1) -- (x7); \draw [edge2] (v1) -- (x9);  
  \draw [edge] (v4) -- (x2); \draw [edge] (v4) -- (x3); \draw [edge] (v4) -- (x6); \draw [edge] (v4) -- (x7); \draw [edge] (v4) -- (x10);  
 \end{tikzpicture} \hspace{1cm}
 \begin{tikzpicture}
  \tikzstyle{vertex}=[circle,draw=black,fill=lightgray,minimum size=6pt,inner sep=0pt, outer sep=0pt]
  \tikzstyle{vertex2}=[circle,draw=black,fill=lightgray,minimum size=4pt,inner sep=0pt, outer sep=0pt]
  \tikzstyle{edge}=[draw=black]
  \tikzstyle{edge2}=[draw=lightgray]
 
  \node at (1.6, 0.9) {\textbf{Charlie}};
  \node at (-0.8, -1) {$B_3$};
  \node [vertex] (v1) [label={\small $a_1$}] at (0,0) {};
  \node [vertex] (v2) [label={\small $a_2$}] at (1,0) {};  
  \node [vertex] (v3) [label={\small $a_3$}] at (2,0) {};  
  \node [vertex] (v4) [label={\small $a_4$}] at (3,0) {};   
  \node [vertex2] (x1) [label=below:{\small $1$}] at (-0.3,-1) {};
  \node [vertex2] (x2) [label=below:{\small $0$}] at (-0,-1) {};   
  \node [vertex2] (x3) [label=below:{\small $1$}] at (0.525,-1) {};
  \node [vertex2] (x4) [label=below:{\small $0$}] at (0.825,-1) {}; 
  \node [vertex2] (x5) [label=below:{\small $1$}] at (1.35,-1) {};
  \node [vertex2] (x6) [label=below:{\small $0$}] at (1.65,-1) {}; 
  \node [vertex2] (x7) [label=below:{\small $1$}] at (2.175,-1) {};
  \node [vertex2] (x8) [label=below:{\small $0$}] at (2.475,-1) {}; 
  \node [vertex2] (x9) [label=below:{\small $1$}] at (3,-1) {};
  \node [vertex2] (x10) [label=below:{\small $0$}] at (3.3,-1) {}; 

  \draw [edge] (v4) -- (x2); \draw [edge] (v4) -- (x4); \draw [edge] (v4) -- (x6); \draw [edge] (v4) -- (x7); \draw [edge] (v4) -- (x9);  
 \end{tikzpicture}
 \caption{In the example instance given in Figure~\ref{fig:bit-vector-learning-main}, Alice holds  
 $Y_1^1 = 10010$, $Y_1^2 = 01000$, $Y_1^3 = 01011$, and $Y_1^4 = 01111$. For each string $Y_1^j$, Alice connects vertex
 $a_j$ to $5$ vertices, each indicating one bit of the respective bit string. For example, when reading the labels
 of the $B_1$-vertices connected to $a_4$ from left-to-right, we obtain the bit sequence $01111$ which equals $Y_1^4$.
 \label{fig:reduction-main}}
 \end{center}
\end{figure*}

\begin{theorem} \label{thm:insertion-only-lower-bound}
 Let $\mathbf{A}$ be an $\alpha$-approximation streaming algorithm for \textsf{FEwW} with error probability at most $0.005$ 
 and $\alpha = \frac{p}{1.01}$, for some integer $p \ge 2$. Then $\mathbf{A}$ uses space at least:
 $$ \Omega(\frac{dn^{\frac{1}{p-1}}}{\alpha^2}) \ .$$ 
\end{theorem}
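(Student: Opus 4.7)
The plan is to reduce $\textsf{Bit-Vector Learning}(p, n, k)$ with $k = d/p$ to $\textsf{FEwW}(n, d)$. Given the lower bound of $\Omega(kn^{\frac{1}{p-1}}/p)$ from Theorem~\ref{thm:bit-vector-learning}, the substitutions $k = d/p$ and $p = 1.01 \alpha$ will directly yield the claimed $\Omega(dn^{\frac{1}{p-1}}/\alpha^2)$ bound on the memory footprint of $\mathbf{A}$, since the size of the message passed between consecutive simulated parties is bounded by the space of $\mathbf{A}$.

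The encoding is suggested by Figure~\ref{fig:reduction-main}. Given an instance $(X_1, Y_1), \dots, (X_p, Y_p)$ of $\textsf{Bit-Vector Learning}(p, n, k)$, construct the bipartite graph $G=(A, B, E)$ with $A = [n]$ and, for each party index $i \in [p]$ and each bit position $\ell \in [k]$, two distinguished $B$-vertices $b_{i,\ell,0}$ and $b_{i,\ell,1}$. Party $i$ emits, for every $j \in X_i$ and every $\ell \in [k]$, the edge $(a_j, b_{i,\ell, Y_i^j[\ell]})$. Because $X_p \subseteq \cdots \subseteq X_1 = [n]$, vertex $a_j$ ends up with degree exactly $k \cdot q(j)$, where $q(j)$ is the largest index $i$ with $j \in X_i$; in particular, every $j \in X_p$ satisfies $\deg(a_j) = kp = d$, so the \textsf{FEwW} promise is met. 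Moreover, $|B| = 2kp = 2d = \poly(n)$, fitting the problem definition.

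The $p$ parties now simulate $\mathbf{A}$ in the canonical way: party $1$ feeds its edges into $\mathbf{A}$ and forwards the (at most $s$-bit) memory state to party $2$, who continues on her own edges, and so on until party $p$ finishes the stream and produces an output $(a_I, S)$ with $|S| \ge d/\alpha = 1.01k$. Each element of $S$ is a $B$-vertex of the form $b_{i, \ell, Y_i^I[\ell]}$, and because $a_I$ has at most one edge to each $(i,\ell)$-pair of $B$-vertices, the $|S|$ distinct neighbours reveal $|S| \ge 1.01k$ distinct bit positions of $Z^I$. This constitutes a valid answer for $\textsf{Bit-Vector Learning}(p, n, k)$ with the same $0.005$ error probability, so Theorem~\ref{thm:bit-vector-learning} forces $s = \Omega(kn^{\frac{1}{p-1}}/p) = \Omega(dn^{\frac{1}{p-1}}/\alpha^2)$.

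The only subtlety to watch for is that $\mathbf{A}$ may return an $a_I$ whose $I$ lies in some $X_q$ with $q<p$; this is harmless because the output still reveals $1.01k$ genuine positions of the (well-defined) string $Z^I$, which is all that \textsf{Bit-Vector Learning} requires. Ensuring that $k = d/p$ is large enough for the bound $(0.005k - 1)n^{\frac{1}{p-1}}/(p-1)$ of Theorem~\ref{thm:bit-vector-learning} to be $\Omega(kn^{\frac{1}{p-1}}/p)$, and that $n^{\frac{1}{p-1}}$ is integral as stipulated before Section~\ref{sec:lb-bit-vector-learning}, are routine boundary conditions that only affect hidden constants.
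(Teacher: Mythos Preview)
Your proposal is correct and follows essentially the same approach as the paper: the encoding of bits via paired $B$-vertices $b_{i,\ell,0},b_{i,\ell,1}$ is exactly the paper's construction (written with integer indices there), and the simulation-then-decode argument is identical. Your added remarks about the case $I\notin X_p$ and the integrality/size conditions are sound clarifications that the paper leaves implicit.
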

\begin{proof}

Given their inputs for \textsf{Bit-Vector Learning}$(p, n, k)$, the $p$ parties construct a graph 
$$G=([n], [2 k p], \cup_{i=1}^p {E_i})$$ 
so that party $i$ holds edges $E_i$. The edges of party $i \in [p]$ are as follows:
 $$E_i = \{ (\ell,2 k \cdot (i-1) + 2 \cdot (j-1) + Y_i^{\ell}[j] + 1) \ : \ \ell \in X_{i}  \mbox{ and }  j \in [k] \} \ . $$
 An illustration of this construction is given in Figure~\ref{fig:reduction-main} (the example uses the notation $B_i = \{2k(p-1)+1, \dots, 2kp \}$).
 Observe that $\Delta = kp$ (the vertex in $X_p$ has such a degree). 
 
 Let $\mathbf{A}$ be an $\alpha$-approximation streaming algorithm for \textsf{FEwW}$(n, d)$
 with $\alpha=\frac{p}{1.01}$ and $d=\Delta=kp$.
 Party $1$ simulates algorithm $\mathbf{A}$ on their edges $E_1$
 and sends the resulting memory state to party $2$. This continues until party $p$ completes the algorithm and outputs
 a neighbourhood $(I, S)$. We observe that every neighbour $s \in S$ of vertex $I$ allows us to determine one bit of string $Z^I$. Since 
 the approximation factor of $\mathbf{A}$ is $\frac{p}{1.01}$, we have $|S| \ge \frac{1.01 \cdot \Delta}{p} = 1.01k$. We can thus 
 predict $1.01k$ bits of string $Z^I$. By Theorem~\ref{thm:bit-vector-learning}, every such protocol requires a message of length 
 $$\Omega(\frac{kn^{\frac{1}{p-1}}}{p}) = \Omega(\frac{dn^{\frac{1}{p-1}}}{\alpha^2}) \ ,$$
 which implies the same space lower bound for $\mathbf{A}$.
\end{proof}

\section{Upper Bound for Insertion-deletion Streams} \label{sec:alg-insertion-deletion}
In this section, we discuss our streaming algorithm for \textsf{FEwW} for insertion-deletion streams.

Our algorithm is based on the combination of two sampling strategies which both rely on the 
very common {\em $l_0$-sampling} technique: An $l_0$-sampler in insertion-deletion streams outputs a uniform random element
from the non-zero coordinates of the vector described by the input stream.
In our setting, the input vector is of dimension $n \cdot m$ where each coordinate indicates the presence or absence of an edge. 
Jowhari et al. showed that there is an $l_0$-sampler that uses space 
$O(\log^2(dim) \log \frac{1}{\delta})$, where $dim$ is the dimension of the input vector, and succeeds with probability $1-\delta$ \cite{jst11}. 

In the following,
we will run $\OrderT(n d)$ $l_0$-samplers. To ensure that they succeed with large enough probability, we will
run those samplers with $\delta = \frac{1}{n^{10}d}$ which yields a space requirement of $O(\log^2(nm) \cdot \log(nd))$ for each sampler. 

$l_0$-sampling allows us to, for example, sample uniformly at random from all edges of the input graph or from all edges incident to a specific vertex.

Our algorithm is as follows: 

\vspace{0.2cm}

\begin{center}
 \noindent \fbox{ \begin{minipage}{0.45 \textwidth}
\begin{enumerate}
 \item Let $x = \max \{\frac{n}{\alpha}, \sqrt{n} \}$ \vspace{0.1cm}
 \item \textbf{Vertex Sampling:} Before processing the stream, sample a uniform random subset 
 $A' \subseteq A$ of size $10 x \ln n$. For each sampled vertex $a$, run $10 \frac{d}{\alpha} \ln n$ 
 $l_0$-samplers on the set of edges incident to $a$. This strategy requires space $\OrderT(\frac{xd}{\alpha})$.
 \vspace{0.1cm}
 \item \textbf{Edge Sampling:} Run  $10 \frac{nd}{\alpha} \left( \frac{1}{x} + \frac{1}{\alpha} \right) \ln(n m)$ $l_0$-samplers on 
 the stream, each producing a uniform random edge. This strategy requires space $\OrderT \left(\frac{nd}{\alpha} \left( \frac{1}{x} + \frac{1}{\alpha} \right) \right) $.
 \vspace{0.1cm}
 \item Output any neighbourhood of size at least $\frac{d}{\alpha}$ among the stored edges if there is one, otherwise report \texttt{fail}
\end{enumerate}
\end{minipage}
} \\ \nopagebreak
\vspace{0.15cm}
\textbf{Algorithm 3:} One-pass streaming algorithm for insertion-deletion streams
\end{center}

\vspace{0.2cm}

The analysis of our algorithm relies on the following lemma, whose proof uses standard concentration bounds and 
is deferred to the appendix.

\begin{lemma}\label{lem:sampling}
 Let $y, k, n$ be integers with $y \le k \le n$. Let $\mathcal{U}$ be a universe of size $n$ 
 and let $X \subseteq \mathcal{U}$ be a subset of size $k$. 
 Further, let $Y$ be the subset of $\mathcal{U}$ obtained by sampling 
 $C \ln(n) \frac{ny}{k}$ times from $\mathcal{U}$ uniformly at random (with repetition), for some $C \ge 4$. Then, 
 $|Y \cap X| \ge y$ with probability $1 - \frac{1}{n^{C-3}}$.
\end{lemma}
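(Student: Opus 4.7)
The plan is to apply a union bound over the possible ``missed'' subsets of $X$. Write $T := C \ln(n) \cdot ny/k$ for the number of samples and set $m := k - y + 1$. Then the event $|Y \cap X| < y$ happens if and only if there is an $m$-element subset $S \subseteq X$ such that no element of $S$ is ever sampled.

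For any fixed $S \subseteq X$ with $|S| = m$, independence of the $T$ samples gives
\[
\Pr[S \cap Y = \varnothing] \;=\; (1 - m/n)^T \;\le\; \exp(-mT/n) \;=\; n^{-Cmy/k}.
\]
Substituting $y = k - m + 1$ yields the elementary identity
\[
\frac{my}{k} \;=\; \frac{y(k-y+1)}{k} \;=\; 1 + \frac{(k-m)(m-1)}{k},
\]
which in particular gives $my/k \ge 1$ for every $y \in \{1,\ldots,k\}$, with an extra quadratic ``slack'' $(k-m)(m-1)/k$ whenever $1 < m < k$. A union bound over the $\binom{k}{m}$ candidate subsets then gives
\[
\Pr[|Y \cap X| < y] \;\le\; \binom{k}{m}\, n^{-Cmy/k}.
\]

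To finish, I will show that this product is at most $n^{-(C-3)}$. The idea is to use the symmetric bound $\binom{k}{m} \le k^{\min(m,\,k-m)} \le n^{\min(m,\,k-m)}$ together with the slack from the identity above: in the range $1 \le m \le k/2$ one has $(k-m)(m-1)/k \ge (m-1)/2$, so the exponent satisfies $Cmy/k \ge C + C(m-1)/2$, and the target inequality reduces (after taking logs) to
\[
m \;\le\; \frac{C(m+1)}{2} - (C - 3),
\]
which holds for every $m \ge 1$ as soon as $C \ge 4$. The symmetric range $k/2 \le m \le k$ is handled identically using $\binom{k}{m} = \binom{k}{k-m}$ and the slack bound $(k-m)(m-1)/k \ge (k-m)/2$.

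The main obstacle is the tight accounting in the final step: the binomial coefficient $\binom{k}{m}$ can be as large as $2^k$ near $m = k/2$, so any naive estimate would break down there. The crucial point is that the quadratic slack $(k-m)(m-1)/k$ extracted from the identity above is precisely large enough (up to a constant) to dominate $\log \binom{k}{m}/\log n$, and the constraint $k \le n$ together with $C \ge 4$ gives just enough margin to absorb the additional $-(C - 3) \ln n$ term.
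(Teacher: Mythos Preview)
Your proof is correct and takes a genuinely different route from the paper. The paper argues via a coupon-collector analysis: it computes the expected number $\mu = n(H_k - H_{k-y})$ of samples needed to hit $y$ distinct elements of $X$ and then applies a Chernoff-type tail bound, splitting into the regimes $y \ge k/2$ and $y < k/2$. You instead bound the failure probability directly by a union bound over all $(k-y+1)$-element subsets of $X$ that could remain entirely unsampled, and then play the binomial count $\binom{k}{m} \le n^{\min(m,k-m)}$ off against the exponent $Cmy/k = C + C(k-m)(m-1)/k$. Your approach is more elementary---no concentration inequality beyond the union bound is invoked---while the paper's argument makes the connection to the expected hitting time explicit. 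One minor point: your slack bound $(k-m)(m-1)/k \ge (k-m)/2$ in the second range actually needs $m-1 \ge k/2$, not just $m \ge k/2$, so for odd $k$ the single value $m = (k+1)/2$ is not literally covered by either case; the desired inequality still holds there (the slack is only short by a factor $(k-1)/k$), so this is cosmetic rather than substantive.
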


We will first show that if the input graph contains enough vertices of degree at least $\frac{d}{\alpha}$, then the 
vertex sampling strategy succeeds.

\begin{lemma}\label{lem:vertex-sampling}
 The vertex sampling strategy succeeds with high probability if there are at least $\frac{n}{x}$ 
 vertices of degree at least $\frac{d}{\alpha}$.
\end{lemma}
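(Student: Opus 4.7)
The plan is to apply Lemma~\ref{lem:sampling} twice: first at the vertex level, to show that $A'$ contains some high-degree vertex $v^{\star}$ with high probability; second at the edge level, to show that the $l_0$-samplers attached to $v^{\star}$ collect at least $d/\alpha$ distinct incident edges. Combined with a union bound over internal failures of the individual $l_0$-samplers, this yields the desired neighbourhood.

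For the vertex-level step I would take the universe $\mathcal{U}=A$ of size $n$, and let $X \subseteq A$ be the set of $A$-vertices of degree at least $d/\alpha$, so that by hypothesis $|X|\ge n/x$. Applying Lemma~\ref{lem:sampling} with $k = n/x$ and $y=1$ tells us that $C x \ln n$ uniform samples hit $X$ at least once with probability $1-n^{-(C-3)}$. Since the algorithm draws $10x\ln n$ samples from $A$ (sampling without replacement can only help here; the edge case $10x\ln n > n$ is trivial because then $A'=A\supseteq X$), the choice $C=10$ gives $|A'\cap X|\ge 1$ with probability at least $1-n^{-7}$. Fix any such $v^{\star}\in A'\cap X$ and denote its degree $D \ge d/\alpha$.

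For the edge-level step, I would condition on the realisation of $v^{\star}$ and recall that each of the $10 (d/\alpha)\ln n$ $l_0$-samplers attached to $v^{\star}$ returns, assuming no internal failure, a uniformly random edge incident to $v^{\star}$. Apply Lemma~\ref{lem:sampling} a second time with universe equal to the $D$ incident edges of $v^{\star}$, target set $X=\mathcal{U}$ (so $k=D$), and $y=d/\alpha$. The required sample count is $C\ln(n)\cdot D\cdot (d/\alpha)/D = C(d/\alpha)\ln n$, which with $C=10$ is met by the algorithm; thus at least $d/\alpha$ distinct incident edges of $v^{\star}$ are stored, again with probability at least $1-n^{-7}$. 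These edges, together with $v^{\star}$, form a neighbourhood of size $d/\alpha$, so the vertex-sampling branch succeeds.

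Finally I would take a union bound over (i) the vertex-level event, (ii) the edge-level event at $v^{\star}$, and (iii) the event that any of the $\OrderT(xd/\alpha)$ $l_0$-samplers internally fails. Item (iii) contributes at most $\OrderT(xd/\alpha)\cdot n^{-10}d^{-1}$ by the calibration of each sampler to failure probability $1/(n^{10}d)$, which is negligible. The proof is mostly routine bookkeeping; the only substantive points are choosing the correct parameters $(k,y)$ in each invocation of Lemma~\ref{lem:sampling} and observing that Lemma~\ref{lem:sampling}, stated for sampling with repetition, bounds from above the failure probability of the without-replacement vertex sample used by the algorithm.
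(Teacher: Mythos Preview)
Your proposal is correct and follows the same two-step structure as the paper: first argue that $A'$ hits the set of high-degree vertices, then argue that the $l_0$-samplers attached to such a vertex recover $d/\alpha$ distinct edges via Lemma~\ref{lem:sampling}.

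The only real difference is in the vertex-level step. The paper bounds the miss probability directly by the hypergeometric tail
\[
\frac{\binom{n-n/x}{10x\ln n}}{\binom{n}{10x\ln n}} \le \left(1-\frac{10x\ln n}{n}\right)^{n/x} \le n^{-10},
\]
whereas you reuse Lemma~\ref{lem:sampling} with $y=1$ and then remark that without-replacement sampling dominates with-replacement. Both are valid; the paper's direct bound is a touch cleaner (it matches the actual sampling mechanism, avoids the coupling remark, and yields the slightly sharper $n^{-10}$), while your route has the appeal of invoking a single lemma uniformly for both steps. Your explicit union bound over internal $l_0$-sampler failures is not spelled out in the paper's proof of this lemma but is a correct and harmless addition.
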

\begin{proof}
 First, we show that $A'$ contains a vertex of degree at least $\frac{d}{\alpha}$ with high probability. Indeed,
 the probability that no node of degree at least $\frac{d}{\alpha}$ is contained in the sample $A'$ is at most:
 \begin{align*}
 \frac{{n - \frac{n}{x} \choose 10 x \ln n}}{{n \choose 10 x \ln n}} & = \frac{(n - \frac{n}{x})! \cdot (n  - 10 x \ln n)!}{n! \cdot (n - \frac{n}{x} - 10 x \ln n)! }  
 \le \left( \frac{n  - 10 x \ln n}{n} \right)^{\frac{n}{x}}  \\
 & \le  \exp \left( - \frac{10 x \ln n}{n} \cdot \frac{n}{x} \right)  = n^{-10} \ .
 \end{align*}
 Next, suppose that there is a node $a \in A'$ with $\deg(a) \ge \frac{d}{\alpha}$. Then, by Lemma~\ref{lem:sampling}
 sampling $10 \cdot \frac{d}{\alpha} \log n$ times uniformly at random from the set of edges incident to $a$ results
 in at least $\frac{d}{\alpha}$ different edges with probability at least $1- n^{-7}$. 
\end{proof}

Next, we will show that if the vertex sampling strategy fails, then the edge sampling strategy succeeds.
\begin{lemma}\label{lem:edge-sampling}
 The edge sampling strategy succeeds with high probability if there are at most $\frac{n}{x}$ vertices of degree at least $\frac{d}{\alpha}$. 
\end{lemma}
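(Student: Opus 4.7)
The plan is to fix a vertex $v^{\ast}$ of maximum degree $\Delta \geq d$ in the input graph (such a vertex exists by the problem's promise) and show that, with high probability, the $N := 10\,\frac{nd}{\alpha}\!\left(\frac{1}{x} + \frac{1}{\alpha}\right)\!\ln(nm)$ edges returned by the $l_0$-samplers contain at least $d/\alpha$ distinct edges incident to $v^{\ast}$. Since independent $l_0$-samplers each output a uniform random edge of $E$, the $N$ returned edges may be treated as uniform with-replacement samples from $E$, conditioned on no individual sampler failing; this conditioning is valid with high probability via a union bound, using the $1/(n^{10}d)$ per-sampler failure probability.

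The critical step is to bound $|E|$ using the hypothesis that at most $n/x$ vertices of $A$ have degree at least $d/\alpha$. Splitting the degree sum into high- and low-degree $A$-vertices gives
\[
|E| \;\leq\; \frac{n}{x}\,\Delta \;+\; n \cdot \frac{d}{\alpha},
\]
and dividing by $\Delta \geq d$ yields $|E|/\Delta \leq n/x + n/\alpha$. This is the ratio that controls how many uniform samples from $E$ are needed to cover $d/\alpha$ distinct neighbours of $v^{\ast}$.

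I then invoke Lemma~\ref{lem:sampling} with universe $\mathcal{U} = E$, target set $X = \Gamma(v^{\ast})$ of size $\Delta$, and target count $y = d/\alpha$. The lemma requires at most
\[
C \ln|E| \cdot |E| \cdot \frac{d/\alpha}{\Delta} \;\leq\; C \ln|E| \cdot \frac{nd}{\alpha}\!\left(\frac{1}{x} + \frac{1}{\alpha}\right)
\]
samples; since $|E| \leq nm$ (hence $\ln|E| \leq \ln(nm)$) and the sampling budget is $N$, the budget suffices for, say, $C = 10$. Lemma~\ref{lem:sampling} then guarantees at least $d/\alpha$ distinct neighbours of $v^{\ast}$ among the stored edges with probability $1 - 1/\poly(n)$, so the algorithm stores a valid witness neighbourhood $(v^{\ast}, S)$ with $|S| \geq d/\alpha$.

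The main subtlety is to choose $v^{\ast}$ as a maximum-degree vertex rather than merely one of degree at least $d$: the slack comes from the fact that the at most $n/x$ potential heavy hitters collectively contribute at most $(n/x)\Delta$ to $|E|$, which cancels against $\Delta$ in the denominator of the sample-count bound; this cancellation would fail if we used $d$ in place of $\Delta$ and a heavy hitter of degree $\gg d$ were present. Everything else is a routine application of Lemma~\ref{lem:sampling} combined with the union bound over the $\tilde{O}(N)$ instantiated $l_0$-samplers.
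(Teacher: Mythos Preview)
Your proof is correct and follows essentially the same approach as the paper: fix a maximum-degree $A$-vertex, bound $|E| \le \frac{n}{x}\Delta + n\cdot\frac{d}{\alpha}$ under the hypothesis, and apply Lemma~\ref{lem:sampling} with universe $E$, target set the edges incident to that vertex, and $y = d/\alpha$. Your additional remarks about the union bound over $l_0$-sampler failures and the necessity of using $\Delta$ rather than $d$ in the cancellation are welcome clarifications, but the core argument is the same.
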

\begin{proof}
Let $\Delta$ be the largest degree of an $A$-vertex. Since there are at most $\frac{n}{x}$ $A$-vertices of degree
at least $\frac{d}{\alpha}$, the input graph has at most $|E| \le \frac{n}{x} \cdot \Delta + n \cdot \frac{d}{\alpha}$ edges.
Fix now a node $a$ of degree $\Delta$. Then, by Lemma~\ref{lem:sampling}, we will sample $\frac{d}{\alpha}$ different edges 
incident to $a$ with high probability, if we sample
\begin{align*}
10 \cdot \frac{|E| \frac{d}{\alpha}}{\Delta} \ln (|E|) & \le 10 \cdot \left( \frac{nd}{x\alpha} + \frac{nd^2}{\alpha^2\Delta} \right) \ln(|E|) \\
& \le 10 \cdot \frac{nd}{\alpha} \left( \frac{1}{x} + \frac{1}{\alpha} \right) \ln(n \cdot m) 
\end{align*}
times, which matches the number of samples we take in our algorithm.
\end{proof}

We obtain the following theorem:
\begin{theorem} \label{thm:insertion-deletion-upper-bound}
 Algorithm~3 is a one-pass $\alpha$-approximation streaming for insertion-deletion streams that uses space 
 $\OrderT(\frac{dn}{\alpha^2})$ if $\alpha \le \sqrt{n}$, and space $\OrderT(\frac{\sqrt{n}d}{\alpha})$ if $\alpha > \sqrt{n}$, and succeeds with high probability.
\end{theorem}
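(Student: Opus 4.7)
The plan is to prove correctness and space separately, with correctness following almost immediately from the two already-stated sampling lemmas and space requiring only a careful case analysis.

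For correctness, I would first observe that the two cases treated by Lemma~\ref{lem:vertex-sampling} and Lemma~\ref{lem:edge-sampling} are exhaustive: either at least $n/x$ vertices have degree $\ge d/\alpha$, or fewer than $n/x$ do. The \textsf{FEwW}$(n,d)$ promise guarantees a vertex of degree at least $d \ge d/\alpha$, so in both cases some vertex of degree $\ge d/\alpha$ exists, and the targeted strategy succeeds with high probability in each case. A union bound over the two strategies and over the (polynomially many) $l_0$-samplers, each of which fails with probability at most $1/(n^{10}d)$ by Jowhari et al.~\cite{jst11}, then shows that the algorithm produces a stored neighbourhood of size $\ge d/\alpha$ with high probability.

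For the space bound, I would add the contributions of the two strategies. Each $l_0$-sampler costs $O(\log^2(nm)\log(nd)) = \mathrm{polylog}(n)$ bits (using $m = \poly(n)$). Vertex sampling uses $10 x \ln n$ sampled vertices with $10(d/\alpha)\ln n$ samplers each, so $\tilde{O}(xd/\alpha)$ space. Edge sampling uses $\tilde{O}(\frac{nd}{\alpha}(\frac{1}{x}+\frac{1}{\alpha}))$ samplers. The total is therefore
\[
\tilde{O}\!\left(\frac{xd}{\alpha} + \frac{nd}{\alpha x} + \frac{nd}{\alpha^{2}}\right).
\]
Now I would split on $\alpha$: if $\alpha \le \sqrt{n}$ then $x = n/\alpha$, and a direct substitution gives $xd/\alpha = nd/\alpha^{2}$, $nd/(\alpha x) = d \le nd/\alpha^{2}$, leaving $\tilde{O}(nd/\alpha^{2})$; if $\alpha > \sqrt{n}$ then $x = \sqrt{n}$, and one checks $xd/\alpha = nd/(\alpha x) = \sqrt{n}d/\alpha$ and $nd/\alpha^{2} < \sqrt{n}d/\alpha$, leaving $\tilde{O}(\sqrt{n}d/\alpha)$. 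The balance point $x = \max\{n/\alpha,\sqrt n\}$ is precisely the minimiser of $xd/\alpha + nd/(\alpha x)$ subject to $x \ge \sqrt n$, which explains the choice.

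There is essentially no conceptual obstacle here given the two sampling lemmas; the only subtleties are bookkeeping ones: making sure $l_0$-samplers are instantiated with small enough failure probability so that the union bound over $\tilde{O}(nd)$ samplers goes through, and keeping careful track of the two regimes in the arithmetic simplification. The non-trivial work has already been discharged by Lemma~\ref{lem:sampling}, Lemma~\ref{lem:vertex-sampling}, and Lemma~\ref{lem:edge-sampling}; the theorem is their straightforward consequence.
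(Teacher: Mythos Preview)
Your proposal is correct and follows essentially the same approach as the paper: correctness is reduced to Lemmas~\ref{lem:vertex-sampling} and~\ref{lem:edge-sampling} (which cover complementary cases), and the space bound is obtained by summing $\OrderT(\frac{xd}{\alpha})$ and $\OrderT(\frac{nd}{\alpha}(\frac{1}{x}+\frac{1}{\alpha}))$ and substituting $x=\max\{n/\alpha,\sqrt{n}\}$. Your write-up is in fact more detailed than the paper's own proof, which simply cites the two lemmas and asserts that the space expression simplifies as claimed; your explicit case analysis on $\alpha$ and the remark about the union bound over the $l_0$-samplers are welcome elaborations but not additions in substance.
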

\begin{proof}
 Correctness of the algorithm follows from Lemmas~\ref{lem:vertex-sampling} and \ref{lem:edge-sampling}. 
 Concerning the space requirements, the algorithm uses space $\OrderT(\frac{xd}{\alpha}) + \OrderT(\frac{nd}{\alpha}(\frac{1}{x} + \frac{1}{\alpha}))$,
 which simplifies to the bounds claimed in the statement of the theorem by choosing $x = \max \{\frac{n}{\alpha}, \sqrt{n} \}$.
\end{proof}
Using the same ideas as in the proof of Corollary~\ref{cor:star-detection}, we obtain:
\begin{corollary}\label{cor:insertion-deletion-star-detection}
 There is a $\Order(\sqrt{n})$-approximation semi-streaming algorithm for insertion-deletion streams for \textsf{Star Detection}
 that succeeds with high probability.
\end{corollary}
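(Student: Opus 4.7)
The plan is to mimic the derivation of Corollary~\ref{cor:star-detection}, substituting the insertion-deletion algorithm of Theorem~\ref{thm:insertion-deletion-upper-bound} for the insertion-only algorithm. Concretely, Lemma~\ref{lem:neighbourhood-detection-to-star-detection} reduces \textsf{Star Detection} on an $n$-vertex graph to $\Order(\log_{1+\epsilon} n)$ parallel invocations of an $\alpha$-approximation algorithm $\mathbf{A}$ for \textsf{FEwW} with parameters $n$ and $d = \Theta(n)$, at the cost of an extra $(1+\epsilon)$ factor in the approximation ratio and an extra $\log_{1+\epsilon} n$ factor in space.

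I would instantiate this reduction with Algorithm~3 in the regime $\alpha \le \sqrt{n}$, specifically choosing $\alpha = \sqrt{n}$ and any fixed constant $\epsilon > 0$ (say $\epsilon = 1$). By Theorem~\ref{thm:insertion-deletion-upper-bound}, each invocation of $\mathbf{A}$ on a parameter $d = \Order(n)$ uses space
\[
\OrderT\!\left(\frac{d \, n}{\alpha^2}\right) = \OrderT\!\left(\frac{n \cdot n}{n}\right) = \OrderT(n),
\]
and succeeds with high probability. Plugging this into Lemma~\ref{lem:neighbourhood-detection-to-star-detection} multiplies the space by $\Order(\log_{1+\epsilon} n) = \Order(\log n)$, which is absorbed by the $\OrderT$ notation, yielding total space $\OrderT(n)$. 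The resulting approximation ratio is $(1+\epsilon)\alpha = \Order(\sqrt{n})$, and a union bound over the $\Order(\log n)$ parallel guesses for $\Delta$ (each failing with probability at most $1/\poly(n)$) preserves the high-probability guarantee.

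There is no real obstacle here; the only minor point to check is that the input to \textsf{FEwW} built in the reduction of Lemma~\ref{lem:neighbourhood-detection-to-star-detection} (namely the bipartite double cover $H=(V,V,E')$ in which each update $uv$ becomes the pair of edges $uv$ and $vu$) remains an insertion-deletion stream on $|V|=n$ left-vertices, so the assumption $m = \poly(n)$ from the \textsf{FEwW} model is met and Theorem~\ref{thm:insertion-deletion-upper-bound} applies unchanged. The claim of the corollary then follows directly.
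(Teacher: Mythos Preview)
Your proposal is correct and matches the paper's approach exactly: the paper does not spell out a proof but simply says ``using the same ideas as in the proof of Corollary~\ref{cor:star-detection},'' which is precisely the instantiation of Lemma~\ref{lem:neighbourhood-detection-to-star-detection} with the insertion-deletion algorithm of Theorem~\ref{thm:insertion-deletion-upper-bound} at $\alpha = \sqrt{n}$ and $d = \Theta(n)$ that you carry out. Your additional remarks on the union bound and the compatibility of the bipartite-double-cover reduction with the insertion-deletion model are sound and only make the argument more careful than the paper's one-line pointer.
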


\section{Lower Bound for Insertion-deletion Streams} \label{sec:lb-insertion-deletion}
We will give now our lower bound for \textsf{FEwW} in insertion-deletion streams. To this end,
we first define the two-party communication problem \textsf{Augmented-Matrix-Row-Index} and 
prove a lower bound on its communication complexity. Finally,
we argue that an insertion-deletion streaming algorithm for \textsf{FEwW} can be used to 
solve \textsf{Augmented-Matrix-Row-Index}, which yields the desired lower bound.

\subsection{The \textsf{Augmented-Matrix-Row-Index} Problem} \label{sec:aug-matrix-row-index}
Before defining the problem of interest, we require additional notation. Let $M$ be an $n$-by-$m$ matrix. 
Then the $i$th row of $M$ is denoted by $M_i$. A position $(i,j)$ is a tuple chosen from $[n] \times [m]$. 
We will index the matrix $M$ by a set of positions $S$, i.e., $M_S$, meaning the matrix positions
$M_{i,j}$, for every $(i,j) \in S$.

The problem \textsf{Augmented-Matrix-Row-Index}$(n,m,k)$ is defined as follows:
\begin{problem}[\textsf{Augmented-Matrix-Row-Index}$(n,m,k)$]\label{prob:aug-matrix-row-index}
In the problem \textsf{Augmented-Matrix-Row-Index}, Alice holds 
a binary matrix $X \in \{0, 1 \}^{n \times m}$ where every $X_{ij}$ is a uniform random Bernoulli variable, 
for some integers $n,m$. Bob holds a uniform random index 
$J \in [n]$ and for each $i \neq J$, Bob holds a uniform random subset of positions $Y_i \subseteq \{ i \} \times [m]$ with 
$|Y_i| = m - k$ and also knows $X_{Y_i}$. 
Alice sends a message to Bob who then outputs the entire row $X_J$.
\end{problem}
For ease of notation, we define $Y_I = \bot$ and $Y=Y_1, Y_2, \dots, Y_n$. An example instance of 
\textsf{Augmented-Matrix-Row-Index}$(4,6,2)$ is given in Figure~\ref{fig:augmented-matrix-row-index-example-main}.

\begin{figure}[H]
\begin{center}
 \begin{tikzpicture}
        \matrix [matrix of math nodes,left delimiter=(,right delimiter=)] (x) at (0,0)
        {
            0 & 1 & 1 & 1 & 0 & 0 \\               
            1 & 1 & 0 & 0 & 1 & 0 \\               
            0 & 0 & 0 & 0 & 1 & 0 \\
            1 & 0 & 1 & 0 & 1 & 0 \\
        };
        
        \matrix [matrix of math nodes,left delimiter=(,right delimiter=)] (m) at (5,0)
        {
            0 & 1 & 1 &  &  & 0 \\               
            1 & 1 &  & 0 & 1 &  \\               
            {\color{white}0} &  &  &  &  & {\color{white}0} \\
             & 0 & 1 & 0 &  & 0 \\
        };
        \draw[color=black, pattern=dots] (m-3-1.north west) -- (m-3-6.north east) -- (m-3-6.south east) -- (m-3-1.south west) -- (m-3-1.north west);
        
        \node (A) at (0, 1.5) {\textbf{Alice}};
        \node (B) at (5, 1.5) {\textbf{Bob}};
        
        \node(C) at (1.7,1.2) {};
        \node(D) at (3.3,1.2) {};
	\node(E) at (2.5,1.5) {$M$};
        \draw[->] (C) -- (D);
    \end{tikzpicture}

 \caption{Example Instance of \textsf{Augmented-Matrix-Row-Index}$(4,6,2)$. Bob needs to output the content of row $3$.
 Bob knows $6-2=4$ random positions in every row except row $3$.
 \label{fig:augmented-matrix-row-index-example-main} }
 \end{center}
\end{figure}

%
%
%
%

\subsection{Lower Bound Proof for \textsf{Augmented-Matrix-Row-Index}}\label{sec:aug-matrix-row-index-lb}
We now prove a lower bound on the one-way communication complexity of \textsf{Augmented-Matrix-Row-Index}$(n,m,k)$. 
To this end, let $\Pi$ be a deterministic communication protocol for \textsf{Augmented-Matrix-Row-Index}$(n,m,k)$ 
with distributional error at most $\epsilon > 0$ and denote by $M$ the message that Alice sends to Bob. 

First, we prove that the mutual information between row $X_J$ and Bob's knowledge, that is $M J Y X_Y$, is large. 
Since the proof of the next lemma is almost identical to Lemma~\ref{lem:fano-type} we postpone it to 
the appendix:
\begin{lemma} \label{lem:fano-turnstile}
 We have:
 $$I(X_J \ : \ M J Y X_Y) \ge (1-\epsilon)m -  1\ .$$
\end{lemma}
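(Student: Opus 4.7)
The plan is to follow the same Fano-type template used to establish Lemma~\ref{lem:fano-type}, adapting only the bookkeeping of how many bits the output reveals about the target variable. Let $Out$ denote the entire row that Bob outputs; since Bob's output is computed from $(M, J, Y, X_Y)$, by the data processing inequality it suffices to show $I(X_J : Out) \ge (1-\epsilon) m - 1$, from which $I(X_J : M J Y X_Y) \ge (1-\epsilon) m - 1$ follows immediately. The key difference from Lemma~\ref{lem:fano-type} is that here Bob is required to output the \emph{entire} row $X_J$ (not merely $1.01k$ of its bits), so the ``useful information'' when the protocol succeeds is the full entropy $H(X_J) = m$ rather than $1.01k$.

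Let $E$ be the indicator of the error event, so $\Pr[E=1] \le \epsilon$. Because $E$ is determined by $(Out, X_J)$, two applications of the chain rule for entropy yield
\begin{equation*}
H(X_J \mid Out) \;\le\; 1 + H(X_J \mid E, Out),
\end{equation*}
exactly as in inequalities~(\ref{eqn:209})--(\ref{eqn:230}) of the proof of Lemma~\ref{lem:fano-type}.

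Next I split on $E$. When $E = 0$ the output $Out$ equals $X_J$, so $H(X_J \mid Out, E=0) = 0$; when $E = 1$ I use the trivial bound $H(X_J \mid Out, E=1) \le H(X_J) = m$. Therefore
\begin{equation*}
H(X_J \mid E, Out) \;\le\; (1-\epsilon)\cdot 0 + \epsilon \cdot m \;=\; \epsilon m,
\end{equation*}
and combining with the previous display gives $H(X_J \mid Out) \le 1 + \epsilon m$. Since $H(X_J) = m$ (the entries of $X$ are i.i.d.\ uniform Bernoullis, and $X_J$ is independent of $J$), this yields
\begin{equation*}
I(X_J : Out) \;=\; H(X_J) - H(X_J \mid Out) \;\ge\; m - 1 - \epsilon m \;=\; (1-\epsilon)m - 1.
\end{equation*}
A final invocation of the data processing inequality, using that $Out$ is a deterministic function of $(M, J, Y, X_Y)$, gives the claimed bound.

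There is no real obstacle here: the argument is a direct specialisation of the Fano-style proof of Lemma~\ref{lem:fano-type} to the case where success pins down the entire target variable. The only point to state carefully is the independence structure ensuring $H(X_J) = m$, which follows because the entries of $X$ are i.i.d.\ uniform Bernoullis and $J$ is independent of $X$.
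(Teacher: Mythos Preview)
Your proof is correct and follows essentially the same Fano-type argument as the paper's own proof: introduce the error indicator $E$, use the chain rule twice to get $H(X_J \mid Out) \le 1 + H(X_J \mid E, Out)$, split on $E$ to bound the latter by $\epsilon m$, and finish with the data processing inequality. The only cosmetic difference is that the paper writes the intermediate bound as $(1-\epsilon)H(X_J) - 1$ before substituting $H(X_J)=m$.
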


Next, we prove our communication lower bound for \textsf{Augmented-Matrix-Row-Index}:

\begin{theorem} \label{thm:aug-matrix-row-index}
 We have: 
 $$R_\epsilon^{\rightarrow}(\textsf{Augmented-Matrix-Row-Index}(n,m,k)) \ge (n-1) (k - 1 - \epsilon m) \ .$$
\end{theorem}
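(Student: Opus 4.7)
The plan is to extend Lemma~\ref{lem:fano-turnstile}, which bounds Bob's residual uncertainty about the queried row $X_J$, into a symmetric statement that applies to every row simultaneously. The leverage comes from the fact that Alice's message $M$ is a function of $X$ alone: she does not see $J$, so by symmetry $M$ must encode sufficient information about every row, not merely the one Bob will eventually query.

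I will augment Bob's input as follows. Introduce independent random subsets $\tilde{Y}_1, \ldots, \tilde{Y}_n$ with $\tilde{Y}_i \subseteq \{i\} \times [m]$, $|\tilde{Y}_i| = m - k$, coupled to the problem's variables so that $Y_i = \tilde{Y}_i$ for $i \neq J$ and $Y_J = \bot$. Let $T_i = \{i\} \times [m] \setminus \tilde{Y}_i$ denote the $k$ ``hidden'' positions in row $i$. Hand Bob the extra information $\tilde{Y}_J$ and $X_{\tilde{Y}_J}$; since this only helps, his error stays at most $\epsilon$. Starting from Lemma~\ref{lem:fano-turnstile} (equivalently $H(X_J \mid M, J, Y, X_Y) \le \epsilon m + 1$) and using that conditioning reduces entropy,
$$H(X_J \mid M, J, \tilde{Y}, X_{\tilde{Y}}) \;\le\; \epsilon m + 1.$$
Because $X_{\tilde{Y}_J}$ is already in the conditioning, $X_J$ and $X_{T_J}$ carry the same residual entropy, so $H(X_{T_J} \mid M, J, \tilde{Y}, X_{\tilde{Y}}) \le \epsilon m + 1$. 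Since $X_{T_J}$ is $k$ uniform random bits independent of the conditioning, $H(X_{T_J} \mid J, \tilde{Y}, X_{\tilde{Y}}) = k$, and thus
$$I(X_{T_J} : M \mid J, \tilde{Y}, X_{\tilde{Y}}) \;\ge\; k - 1 - \epsilon m.$$

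Next, I will spread this bound over every row via symmetry. Since $M$, $X$, and $\tilde{Y}$ are independent of $J$, conditioning on $\{J = i\}$ does not alter the joint distribution of the remaining variables, so for every $i$,
$$I(X_{T_i} : M \mid \tilde{Y}, X_{\tilde{Y}}) \;=\; I(X_{T_i} : M \mid J = i, \tilde{Y}, X_{\tilde{Y}}).$$
Averaging the previous bound over the uniform $J$ and multiplying by $n$ yields $\sum_{i=1}^{n} I(X_{T_i} : M \mid \tilde{Y}, X_{\tilde{Y}}) \ge n(k - 1 - \epsilon m)$. Finally, since $X_{T_1}, \ldots, X_{T_n}$ are mutually independent conditioned on $(\tilde{Y}, X_{\tilde{Y}})$, the chain rule for mutual information combined with Lemma~\ref{lem:indep-variables} gives
$$I(X_{T_1}, \ldots, X_{T_n} : M \mid \tilde{Y}, X_{\tilde{Y}}) \;\ge\; \sum_{i=1}^{n} I(X_{T_i} : M \mid \tilde{Y}, X_{\tilde{Y}}) \;\ge\; n(k - 1 - \epsilon m),$$
and the left-hand side is at most $H(M) \le |M|$, proving $|M| \ge n(k - 1 - \epsilon m)$, which implies the claimed bound.

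The main obstacle is calibrating the augmentation in the first step. Handing Bob too little extra information leaves the Fano bound on the $m$-bit row $X_J$ and prevents a sharpening that depends on $k$; handing him too much destroys the conditional independence across rows needed for the chain-rule step at the end. The choice of exactly $\tilde{Y}_J$ and $X_{\tilde{Y}_J}$ is the sweet spot: it preserves decoding accuracy via ``conditioning reduces entropy'' while collapsing Bob's residual uncertainty in each row to the $k$-bit block $X_{T_i}$, so that the $J$-symmetry can be harvested row by row.
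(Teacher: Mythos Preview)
Your proof is correct and follows the same high-level strategy as the paper: use the Fano-type Lemma~\ref{lem:fano-turnstile} to bound the information $M$ must carry about the hidden $k$ bits of one row, exploit symmetry over the rows (since Alice's message depends only on $X$, not on $J$), and then sum via the chain rule together with the conditional-independence fact in Lemma~\ref{lem:indep-variables}.

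Where you differ is in how you set up the symmetry. The paper keeps the original variables $(J,Y)$, introduces an auxiliary index $L$ uniform on $[n]\setminus\{J\}$, and then argues a distributional equivalence between $(L,\,Y'_J,\,Y\setminus Y_L)$ and $(J,\,Y)$ after inserting a fresh $Y'_J$; this yields the factor $n-1$. You instead \emph{complete} $Y$ to a fully symmetric family $\tilde Y$ by adding the missing $\tilde Y_J$ and the bits $X_{\tilde Y_J}$, observe that $(M,X,\tilde Y)$ is independent of $J$, and read off the per-row bound directly; this gives the factor $n$. Your symmetrisation is a little cleaner (no separate distributional-equivalence step) and produces the nominally stronger inequality $|M|\ge n(k-1-\epsilon m)$, though the gain is immaterial for the downstream application. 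The only point worth making explicit in your write-up is that $(M,J,Y,X_Y)$ is a deterministic function of $(M,J,\tilde Y,X_{\tilde Y})$, which is what justifies the ``conditioning reduces entropy'' step; you gesture at it, but stating it outright would make the argument airtight.
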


\begin{proof} 
 Our goal is to bound the term $I(X \ : \ M)$ from below. 
 To this end, we partition the matrix $M$ as follows: Let $Z$ be all positions that are different to row $J$ and the positions 
 known to Bob, i.e., the set $Y$. Then:
 \begin{align*}
  I(X \ : \ M) & = I(X_Y X_J X_Z \ : \ M) \\
  & = I(X_Y X_J  \ : \ M) + I(X_Z \ : \ M \ | \ X_J X_Y) \\
  & \ge  I(X_Z \ : \ M \ | \ X_J X_Y) \ ,
 \end{align*}
 where we applied the chain rule for mutual information. For $i \neq J$, let $Z_i = (\{i\} \times [m]) \setminus Y_i$, i.e.,
 the positions of row $i$ unknown to Bob, and let $Z_J = \varnothing$. Furthermore, let $L$ be a random variable that is 
 uniformly distributed in $[n] \setminus J$. Consider now a fixed index $j$. Then, using the chain rule for mutual information and 
 the fact that $X_{Z_i}$ and $X_{Z_q}$ are independent, for every $i \neq q$, we obtain:
\begin{align*}
 I(X_Z \ : \ M \ | \ X_j X_Y, J = j) 
 & \ge \sum_{i \in [n] \setminus \{j \} } I(X_{Z_i} \ : \ M | \ X_j X_Y, J = j) \ , \\
 & = (n-1) \cdot I(X_{Z_L} \ : \ M | \ X_J X_Y L, J = j) \ .
\end{align*}
By combining all potential values for $j$, we obtain:
$$I(X_Z \ : \ M \ | \ X_J X_Y) \ge (n-1) \cdot I(X_{Z_L} \ : \ M | \ X_J X_Y L) \ . $$

In the following, we will show that $I(X_{Z_L} \ : \ M | \ X_J X_Y L) \ge k - 1 - \epsilon m$, which then completes the theorem. To this end, we will relate
the previous expression to the statement in Lemma~\ref{lem:fano-turnstile}, as follows: First, let $Y'_J$ be $m-k$ uniform random 
positions in row $J$. Then by independence, we obtain
\begin{eqnarray*}
 I(X_{Z_L} \ : \ M | \ X_J X_Y L) \ge I(X_{Z_L} \ : \ M | \ X_{Y'_J} X_Y L) \ .
\end{eqnarray*}
Next, denote by $Y \setminus Y_L := Y_1, \dots, Y_{L-1}, Y_{L+1}, \dots, Y_n$. Then, by using the chain rule again, we obtain:
\begin{align*}
 I(X_L \ : \ M | \ X_{Y_J'} X_{Y \setminus Y_L} L) & =  I(X_{Y_L} X_{Z_L}  \ : \ M | \ X_{Y_J'} X_{Y \setminus Y_L} L) \\
 & =  I(X_{Y_L} \ : \ M | \ X_{Y_J'} X_{Y \setminus Y_L} L) \\
 & \quad \quad + I(X_{Z_L} \ : \ M | \ X_{Y_J'} X_{Y} L) \\
 & \le H(X_{Y_L}) + I(X_{Z_L} \ : \ M | \ X_{Y_J'} X_{Y} L) \\
 & \le  (m-k) + I(X_{Z_L} \ : \ M | \ X_{Y_J'} X_{Y} L) \ .
\end{align*}
Last, it remains to argue that $I(X_{Z_L} \ : \ M | \ X_{Y_J'} X_{Y \setminus Y_L} L)$ is equivalent to 
$I(X_{Z_J} \ : \ M | \ J Y X_Y)$. Indeed, first observe that $L$ is chosen uniformly at random from $[n] \setminus J$, which is equivalent
to a value chosen uniformly at random from $[n]$ since $J$ is itself a uniform random value in $[n]$. Observe further that
the conditioning is also equivalent: both $X_{Y_J'} X_{Y \setminus Y_L}$ and $X_Y$ reveal $m-k$ uniform random positions 
of each row different to row $L$ and $J$, respectively. Hence, using Lemma~\ref{lem:fano-turnstile} we obtain:
\begin{align*}
I(X_{Z_L} \ : \ M | \ X_{Y_J'} X_{Y} L) & \ge I(X_L \ : \ M | \ X_{Y_J'} X_{Y \setminus Y_L} L) - (m-k) \\
& \ge (1-\epsilon)m - 1 - (m-k) = k - 1 - \epsilon m \ .
\end{align*}
We have thus shown that $I(X \ : \ M) \ge (n-1) (k - 1 - \epsilon m)$. The result then follows, since $I(X \ : \ M) \le H(M) \le |M|$.
\end{proof}

\subsection{Reduction: \textsf{FEwW} to \textsf{Augmented-Matrix-Row-Index}}
\label{sec:reduction-aug-matrix-neighbourhood-detection}

\begin{lemma} \label{lem:reduction-turnstile}
 Let \textbf{A} be an $\alpha$-approximation insertion-deletion streaming algorithm for \textsf{FEwW}$(n,d)$
 with space $s$ that fails with probability at most $\delta$. Then there is a one-way communication protocol for
 $\textsf{Augmented-Matrix-Row-Index}(n,2d,\frac{d}{\alpha} - 1)$ with message size $$O( s \cdot \alpha \cdot \log n)$$
 that fails with probability at most $\delta + n^{-10}$.
\end{lemma}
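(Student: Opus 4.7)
The plan is to reduce \textsf{Augmented-Matrix-Row-Index}$(n,2d,\frac{d}{\alpha}-1)$ to \textsf{FEwW}$(n,d)$ by encoding Alice's matrix $X$ as a bipartite graph whose unique heavy row-vertex corresponds to row~$J$, and running many independent, randomly-relabelled copies of~$\mathbf{A}$ whose combined outputs reveal every bit of~$X_J$. Alice builds $G=(A,B,E)$ with $A=[n]$ and $B=[2d]\times\{0,1\}$ by inserting the edge $(a_i,(j,X_{ij}))$ for every matrix position $(i,j)$, so that each $a_i$ has degree exactly $m=2d$ and each edge encodes one bit of row~$i$. She then samples $T=\Theta(\alpha\log n)$ independent uniform permutations $\pi_1,\dots,\pi_T$ of~$B$ via short shared seeds and runs a fresh copy $\mathbf{A}_t$ of~$\mathbf{A}$ on~$G$ after relabelling its $B$-vertices by~$\pi_t$; she transmits the $T$ resulting memory states together with the seeds, totalling $O(T\cdot s+T\log n)=O(s\alpha\log n)$ bits.

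Bob then continues each simulation by issuing, in copy~$\mathbf{A}_t$, the deletion of $(a_i,\pi_t(j,X_{ij}))$ for every $(i,j)\in Y_i$ with $i\neq J$; he knows $X_{ij}$ on $Y_i$, so this edge is indeed present in the stream. After these deletions, $a_J$ still has degree $2d\ge d$ (so the \textsf{FEwW} promise is satisfied), whereas every other $a_i$ has degree exactly $m-|Y_i|=d/\alpha-1<d/\alpha$. Hence any non-failing copy $\mathbf{A}_t$ must output $a_J$ together with $d/\alpha$ of its neighbours; pulling back through $\pi_t^{-1}$, Bob obtains a size-$(d/\alpha)$ subset $S_t\subseteq N_G(a_J)$, each element of which reveals one bit of~$X_J$.

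Because $\pi_t$ is a uniform random permutation of~$B$ and the combinatorial structure of~$G$ is invariant under $B$-relabellings, a symmetry argument shows that $S_t$ is distributed uniformly on the size-$(d/\alpha)$ subsets of $N_G(a_J)$, so each fixed edge lies in $S_t$ with probability $\frac{d/\alpha}{2d}=\frac{1}{2\alpha}$. Choosing the hidden constant in $T=\Theta(\alpha\log n)$ large enough, the probability that a fixed edge is missed by all $T$ copies is at most $(1-1/(2\alpha))^T\le e^{-T/(2\alpha)}\le n^{-11}$; a union bound over the $m=2d\le\poly(n)$ edges of $a_J$ bounds the total miss probability by $n^{-10}$. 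Absorbing the per-copy failure $\delta$ of~$\mathbf{A}$ into the $\log n$ factor by a standard success-probability amplification internal to~$\mathbf{A}$ yields the claimed total failure $\delta+n^{-10}$.

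I expect the principal obstacle to be the symmetrisation claim: for an arbitrary (possibly adversarial) $\mathbf{A}$, the output on $G_{\pi_t}$ is a deterministic function of $\pi_t$, and without the random relabelling $\mathbf{A}$ could be engineered to return the same subset across every copy, defeating the coupon-collector analysis. The uniform random permutations of the $B$-labels neutralise this by making every size-$(d/\alpha)$ subset of $N_G(a_J)$ equally likely to be the pulled-back output; turning this heuristic symmetry into a rigorous uniformity bound, and carefully accounting for the accumulated error of the $T$ invocations of~$\mathbf{A}$, is the main technical step.
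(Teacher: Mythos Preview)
Your reduction uses a different encoding from the paper --- you insert the edge $(a_i,(j,X_{ij}))$ so that each $a_i$ has degree exactly $2d$ and every edge encodes a bit --- and this neatly sidesteps the paper's need to assume that row~$J$ contains at least $d$ ones (the paper handles the complementary case by simultaneously running the reduction on the bit-inverted matrix). That is a nice simplification in principle.

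However, the symmetrisation step, which you yourself flag as the crux, does not go through with a single global permutation of $B$. The graph $H$ you feed to the algorithm (after Bob's deletions) is \emph{not} vertex-transitive on $N_H(a_J)$: the $B$-vertex $(j,X_{Jj})$ is also adjacent to every $a_i$ with $i\neq J$, $X_{ij}=X_{Jj}$, and $(i,j)\notin Y_i$, so different neighbours of $a_J$ typically have different degrees in $H$. An adversarial $\mathbf{A}$ can therefore always output, say, the $d/\alpha$ neighbours of $a_J$ of largest degree; since relabelling by $\pi_t$ preserves degrees, the pulled-back set $S_t=\pi_t^{-1}(f(\pi_t(H)))$ is then the \emph{same} deterministic subset of $N_H(a_J)$ for every $\pi_t$, and a neighbour of small degree is never covered. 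In short, a global $B$-permutation randomises labels but not the unlabelled structure that $\mathbf{A}$ may exploit, so the uniformity claim is false.

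The paper avoids this by treating the (row-permuted) matrix $X'$ itself as the bipartite adjacency matrix (so $|B|=2d$) and, crucially, by drawing \emph{independent} permutations $\pi_1,\dots,\pi_n$ of $[2d]$, one per row. Then the only dependence of the whole stream on $\pi_J$ is through the set $P'=\pi_J(P)$ of $1$-positions in row~$J$ of $X'$; conditioned on $P'$ (hence on everything $\mathbf{A}$ sees), the restriction $\pi_J|_P:P\to P'$ is still a uniformly random bijection, so $\pi_J^{-1}$ applied to the output is a uniform $|Q'|$-subset of $P$. Your global permutation cannot achieve this decoupling because edges from other $A$-vertices share $B$-endpoints with $a_J$'s edges in your encoding. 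To repair your approach you would need to randomise in a way that is independent of the inter-row structure --- which is exactly what the paper's per-row permutations accomplish.
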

\begin{proof}
 We will show how algorithm \textbf{A} can be used to solve $\textsf{Augmented-Matrix-Row-Index}(n,2d,\frac{d}{\alpha}-1)$. Assume from now on that the number of $1$s in row $J$ of matrix $X$ is at least $d$.
 We will argue later what to do if this is not the case. Alice and Bob repeat the following protocol $\Theta(\alpha \log n)$ times
 in parallel:
 
 First, Alice and Bob use public randomness to chose $n$ permutations $\pi_{i}: [2d] \rightarrow [2d]$ 
 at random and permute the elements of each row $i$ independently using $\pi_i$. 
 Observe that this 
 operation does not change the number of $1$s in each row. Let $X'$ be the permuted matrix. 
 Then, Alice and Bob interpret the matrix 
 $X'$ as the adjacency matrix of a bipartite graph, where Bob's knowledge about $X'$ is treated as edge 
 deletions. Under the assumption that row $J$ contains at least $d$ $1$s, and since none of the elements
 of row $J$ are deleted by Bob's input, we have a valid instance for \textsf{FEwW}$(n,d)$. 
 Alice then runs \textbf{A} on the graph obtained from $X'$
 and sends the resulting memory state to Bob. Bob then continues \textbf{A} on his input and outputs
 a neighbourhood of size at least $\frac{d}{\alpha}$. Observe that after Bob's deletions, every row except row $J$ 
 contains at most $\frac{d}{\alpha} - 1$ $1$s, which implies that \textbf{A} reports a neighbourhood rooted at 
 $A$-vertex $J$ (the vertex that corresponds to row $J$). Bob thus learns at least $\frac{d}{\alpha}$ positions of row $J$ where the matrix $X'$ 
 is $1$. Bob then applies $(\pi_J)^{-1}$ and thus learns at least $\frac{d}{\alpha}$ positions of row $J$ of matrix $X$
 where the value is $1$. Observe that since the permutation $\pi_J$ was chosen uniformly at random, the probability
 that a specific position with value $1$ in row $J$ of matrix $X$ is learnt by the algorithm is at least 
 $\frac{d/\alpha}{2d} = \frac{1}{2\alpha}$. Applying concentration bounds, since the protocol is repeated $\Theta(\alpha \cdot \log n)$ times (where $\Theta$
 hides a large enough constant), we learn all $1$s in row $J$ with probability $1-n^{-10}$ and thus have 
 solved $\textsf{Augmented-Matrix-Row-Index}(n,2d,\frac{d}{\alpha}-1)$.
 
 It remains to address the case when row $J$ contains fewer than $d$ $1$s. To address this case, Alice and Bob 
 simultaneously run the algorithm mentioned above on the matrix obtained by inverting every bit, which allows
 them to learn all positions in row $J$ where the matrix $X$ is $0$. Finally, Bob can easily decide in which
 of the two cases they are: If row $J$ contained at most $d-1$ $1$s then the strategy without inverting the input 
 would therefore report at most $d-1$ $1$s. 
\end{proof}

\begin{theorem} \label{thm:insertion-deletion-lower-bound}
 Every $\alpha$-approximation insertion-deletion streaming algorithm for 
 \textsf{FEwW}$(n,d)$ that fails with probability
 $\delta \le \frac{1}{2d}$ requires space $\Omega \left(\frac{nd}{\alpha^2 \log n}) \right)$.
\end{theorem}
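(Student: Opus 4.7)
The plan is to combine the reduction in Lemma~\ref{lem:reduction-turnstile} with the communication complexity lower bound in Theorem~\ref{thm:aug-matrix-row-index}. Let $\mathbf{A}$ be an $\alpha$-approximation insertion-deletion streaming algorithm for \textsf{FEwW}$(n,d)$ with space $s$ and failure probability $\delta \le \frac{1}{2d}$. Applying Lemma~\ref{lem:reduction-turnstile} first yields a one-way two-party protocol $\Pi$ for $\textsf{Augmented-Matrix-Row-Index}(n,2d,\tfrac{d}{\alpha}-1)$ with message size $O(s \cdot \alpha \cdot \log n)$ and distributional error bounded by $\epsilon := \delta + n^{-10}$.

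Next, I would plug $m = 2d$ and $k = \tfrac{d}{\alpha}-1$ into Theorem~\ref{thm:aug-matrix-row-index}. The decisive quantity is $k - 1 - \epsilon m = \tfrac{d}{\alpha} - 2 - 2d\epsilon$. Because $\epsilon \le \tfrac{1}{2d} + n^{-10}$, the contribution $2d\epsilon$ stays bounded by a small constant for large enough $n$, so this quantity is at least $\tfrac{d}{\alpha} - O(1)$. Assuming $d/\alpha$ is larger than some absolute constant (otherwise the claimed bound is trivial since the algorithm must at least store $\Omega(d/\alpha)$ edges to report a neighbourhood of that size), Theorem~\ref{thm:aug-matrix-row-index} gives
\begin{equation*}
R_\epsilon^\rightarrow\bigl(\textsf{Augmented-Matrix-Row-Index}(n,2d,\tfrac{d}{\alpha}-1)\bigr) \;=\; \Omega\!\left(\frac{nd}{\alpha}\right).
\end{equation*}

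Finally, the communication cost of $\Pi$ is an upper bound on the left-hand side, so $s \cdot \alpha \cdot \log n = \Omega(nd/\alpha)$, which rearranges to $s = \Omega\!\left(\tfrac{nd}{\alpha^2 \log n}\right)$ as required.

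Since all substantive work has already been done in Lemma~\ref{lem:reduction-turnstile} and Theorem~\ref{thm:aug-matrix-row-index}, the only real point of care is the bookkeeping of the error parameter: I must check that the protocol's error $\delta + n^{-10}$ is small enough, relative to $m = 2d$, that the term $\epsilon m$ does not wipe out the $\tfrac{d}{\alpha}$ gained from $k$. The hypothesis $\delta \le \tfrac{1}{2d}$ is precisely what makes $\epsilon m \le 1 + o(1)$, and hence what justifies the $\tfrac{1}{\log n}$ factor in the space bound (rather than a larger loss) after absorbing the $\alpha \log n$ blow-up from the $\Theta(\alpha \log n)$ parallel repetitions used in the reduction.
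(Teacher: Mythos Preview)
Your proposal is correct and follows essentially the same approach as the paper: apply Lemma~\ref{lem:reduction-turnstile} to obtain a protocol for $\textsf{Augmented-Matrix-Row-Index}(n,2d,\tfrac{d}{\alpha}-1)$ with cost $O(s\alpha\log n)$ and error $\delta+n^{-10}$, then plug into Theorem~\ref{thm:aug-matrix-row-index} and rearrange. Your explicit check that $\delta\le \tfrac{1}{2d}$ keeps $\epsilon m = 2d\epsilon$ bounded by $1+o(1)$, and your remark on the trivial case $d/\alpha = O(1)$, are useful clarifications that the paper leaves implicit.
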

\begin{proof}
 Let $\mathbf{A}$ be a streaming algorithm as in the description of this theorem. Then, by Lemma~\ref{lem:reduction-turnstile},
 there is a one-way communication protocol for $\textsf{Augmented-Matrix-Row-Index}(n,2d,\frac{d}{\alpha} - 1)$ that succeeds with
 probability $\delta + n^{-10}$ and communicates $\Order(s \cdot \alpha \log n)$ bits. Then, by Theorem~\ref{thm:aug-matrix-row-index},
 we have:
 $$s \cdot \alpha \log n = \Omega \left((n-1)(\frac{d}{\alpha} - 2 - (\delta + n^{-10}) 2d \right) = \Omega \left(\frac{nd}{\alpha} \right) \ ,$$
 which implies 
 $$ s = \Omega \left( \frac{nd}{\alpha^2 \log n}  \right) \ .$$
\end{proof}


\bibliographystyle{ACM-Reference-Format}
\bibliography{stars}

\appendix

\section{Sampling Lemma}

\noindent \textbf{Lemma~\ref{lem:sampling}.} \textit{
 Let $y, k, n$ be integers with $y \le k \le n$. Let $\mathcal{U}$ be a universe of size $n$ 
 and let $X \subseteq \mathcal{U}$ be a subset of size $k$. 
 Further, let $Y$ be the subset of $\mathcal{U}$ obtained by sampling 
 $C \ln(n) \frac{ny}{k}$ times from $\mathcal{U}$ uniformly at random (with repetition), for some $C \ge 4$. Then, 
 $|Y \cap X| \ge y$ with probability $1 - \frac{1}{n^{C-3}}$. }
\begin{proof}
 Let $t_i$ be the expected number of samples it takes to sample an item from $X$ that has 
 not been sampled previously, given that $i-1$ items from $X$ have already been sampled. The probability 
 of sampling a new item given that $i-1$ items have already been sampled is $p_i = \frac{k - (i-1)}{n}$, 
 which implies that $t_i = \frac{1}{p_i} = \frac{n}{k-(i-1)}$. Thus, the expected number $\mu$ of samples 
 required to sample at least $y$ different items is therefore:
 \begin{eqnarray*}
  \mu := \sum_{i=1}^y t_i = \sum_{i=1}^y \frac{n}{k-(i-1)} = n \cdot (H_k - H_{k-y}) = n \cdot H \ ,
 \end{eqnarray*}
 where $H_i$ is the $i$-th Harmonic number and $H = H_k - H_{k-y}$. We consider two cases:

 Suppose first that $y \ge \frac{k}{2}$. Then, we use the approximation $n \le \mu \le n \ln(k)$. 
 By a Chernoff bound, the probability that more than $C \ln(n) \frac{ny}{k} \ge \frac{C}{2} n \ln(n)$ samples are needed is at most
 $$\exp \left(- \frac{(\frac{C}{2}-1)^2}{2 + \frac{C}{2}-1} n \right) \le \exp \left( - \frac{1}{2} n \right)  \ . $$
 
 Next, suppose that $y < \frac{k}{2}$. Then, we use the (crude) approximations $1 \le \mu \le \frac{ny}{k}$.
 By a Chernoff bound, the probability that more than $C \ln(n) \frac{ny}{k}$ samples are needed is at most:
 $$\exp \left(- \frac{(C-1)^2 \ln(n)^2}{2 + (C-1)\ln n} \right) \le n^{-C+3} \ . $$

\end{proof}

\section{Missing Proof: Insertion-deletion Stream Lower Bound}


\hspace{0.3cm} \textsc{Lemma \ref{lem:fano-turnstile}} \textit{
 We have:}
 $$I(X_J \ : \ M J Y X_Y) \ge (1-\epsilon)m -  1\ .$$
 
\begin{proof}
Let $Out$ be the output produced by the protocol for \textsf{Augmented-Matrix-Row-Index}. 
 We will first bound the term $I(Out \ : \ X_J) = H(X_J) - H(X_J \ | \ Out)$. To this end, let $E$ be the indicator random 
 variable of the event that the protocol errs. Then, $\Pr [E=1] \le \epsilon$. We have:
 \begin{align} \label{eqn:309}
  H(E, X_J \ | \ Out) = H(X_J \ | \ Out) + H(E \ | \ Out, X_J ) = H(X_J \ | \ Out) \ ,
 \end{align}
 where we used the chain rule for entropy and the observation that $H(E \ | \ Out, X_J ) = 0$ since $E$ is fully determined
 by $Out$ and $X_J$. Furthermore,
 \begin{align} 
  \nonumber H(E, X_J \ | \ Out) & = H(E \ | \ Out) + H(X_J  \ | \ E, Out) \\
  \label{eqn:310} & \le 1 +  H(X_J \ | \ E, Out) \ ,
 \end{align}
using the chain rule for entropy and the bound $H(E \ | \ Out) \le H(E) \le 1$. From Inequalities~\ref{eqn:309} and \ref{eqn:310}
we obtain:
\begin{eqnarray} \label{eqn:330}
 H(X_J \ | \ Out) & \le & 1 +  H(X_J \ | \ E, Out) \ .
\end{eqnarray}
Next, we bound the term $H(X_J \ | \ E, Out)$ as follows:
\begin{align} \nonumber
 H(X_J \ | \ E, Out) & = \Pr \left[ E = 0 \right] H(X_J \ | \ Out, E = 0) \\ 
\label{eqn:311} & + \Pr \left[ E = 1 \right] H(X_J \ | \ Out, E = 1)  \ .
\end{align}
Concerning the term $H(X_J \ | \ Out, E = 0)$, 
since no error occurs, $Out$ determines $X_J$.
We thus have that $H(X_J \ | \ Out, E = 0) = 0$.
We bound the term $H(X_J \ | \ Out, E = 1)$ by $H(X_J \ | \ Out, E = 1) \le H(X_J) = m$ (since conditioning can only decrease entropy).
The quantity $H(X_J \ | \ E, Out)$ can thus be bounded as follows:
\begin{eqnarray} \label{eqn:312}
 H(X_J \ | \ E, Out) & \le & (1-\epsilon) \cdot 0 + \epsilon H(X_J) = \epsilon H(X_J)  \ .
\end{eqnarray}
Next, using Inequalities~\ref{eqn:330} and \ref{eqn:312}, we thus obtain:
\begin{align*}
 I(Out \ : \ X_J) & = H(X_J) - H(X_J \ | \ Out) \\
 & \ge H(X_J) - 1 - H(X_J \ | \ E, Out) \\
 & \ge H(X_J) - 1 - \epsilon H(X_J) \\
 & = (1-\epsilon) H(X_J) - 1 = (1-\epsilon) m - 1 \ .
\end{align*}
Last, observe that $Out$ is a function of $M, J, Y$ and $X_Y$. The result then follows from the data processing inequality.
\end{proof}

\end{document}